\newtheorem{theorem}{Theorem}
\newtheorem{proposition}{Proposition}
\newtheorem{corollary}{Corollary}
\newtheorem{definition}{Definition}
\newtheorem{remark}{Remark}
\newcommand{\id}{\mathds{1}}
\begin{document}
\renewcommand{\arraystretch}{1.3}


\title{Optimal conversion of Kochen-Specker sets into bipartite perfect quantum strategies}


\author{Stefan Trandafir}
\affiliation{Departamento de F\'{\i}sica Aplicada II, Universidad de Sevilla, E-41012 Sevilla, Spain}

\author{Ad\'an~Cabello}
\email{adan@us.es}
\affiliation{Departamento de F\'{\i}sica Aplicada II, Universidad de Sevilla, E-41012 Sevilla,
Spain}
\affiliation{Instituto Carlos~I de F\'{\i}sica Te\'orica y Computacional, Universidad de
Sevilla, E-41012 Sevilla, Spain}


\begin{abstract}
Bipartite perfect quantum strategies (BPQSs) allow two players isolated from each other to win every trial of a nonlocal game. BPQSs have crucial roles in recent developments in quantum information and quantum computation. However, only few BPQSs with a small number of inputs are known and only one of them has been experimentally tested. It has recently been shown that every BPQS has an associated Kochen-Specker (KS) set. Here, we first prove that any BPQS of minimum input cardinality that can be obtained from a generalized KS set can also be obtained from a KS set of pure states. Then, we address the problem of finding BPQSs of small input cardinality starting from KS sets. We introduce an algorithm that identifies the BPQS with the minimum number of settings for any given KS set. We apply it to many well-known KS sets of small cardinality in dimensions $3$, $4$, $5$, $6$, $7$, and $8$. In each dimension, the algorithm either recovers the best BPQS known or find one with fewer inputs.
\end{abstract}


\maketitle


\section{Introduction}


\subsection{Nonlocal games}


Any quantum violation of a bipartite Bell inequality \cite{Bell:1964PHY} provides quantum advantage in a nonlocal game. That is, two players, Alice and Bob, who are not allowed to communicate with each other, can use a quantum correlation to win a game more often than using the best classical strategy. In every trial of the game, the referee gives Alice an input $x \in X$, and gives Bob an input $y  \in Y$. Inputs are distributed  with a probability $\pi(x,y)$ known by the players. For example, in the Clauser-Horne-Shimony-Holt (CHSH) game \cite{Clauser:1969PRL}, $x,y \in \{0,1\}$ and
$\pi(x,y)=\frac{1}{4}\ \forall (x,y)$.
In return, Alice gives the referee an output $a \in A$ and Bob gives the referee an output $b \in B$. The players win the game if the inputs and outputs satisfy a certain condition $W(x,y,a,b) \in \{0,1\}$. For example, in the CHSH game, $a,b \in \{0,1\}$ and
\begin{equation}
    W(x,y,a,b) = \left\{ \begin{array}{ll} 1 & \text{if } a \oplus b = xy, \\ 0 & \text{otherwise,} \end{array} \right.
\end{equation}
where $\oplus$ denotes the sum modulo two. Then, if the correlation $p(a,b|x,y)$ gives the probabilities of all possible combinations of inputs and outputs, the probability of winning the game is 
\begin{equation}
    \omega = \sum_{x,y,a,b} \pi(x,y)\,p(a,b|x,y)\,W(x,y,a,b).
\end{equation}
If a quantum correlation $p(a,b|x,y)$ violates a Bell inequality, then there is quantum advantage. For example, in the CHSH game, the maximum classical probability of winning is 
\begin{equation}
    \omega_C = \frac{3}{4} = 0.75,
\end{equation}
while the maximum quantum probability of winning is \cite{Tsirelson:1980LMP}
\begin{equation}
    \omega_Q = \frac{2+\sqrt{2}}{4} \approx 0.853.
\end{equation}
Alice's input cardinality is $|X|$ and Bob's input cardinality is $|Y|$. Hereafter, by minimizing the input cardinality of a bipartite nonlocal game, we mean minimizing the product $|X||Y|$.


\subsection{Bipartite perfect quantum strategies}
\label{sec:perf-quantum-strategies}


A bipartite nonlocal game has a bipartite perfect quantum strategy (BPQS) if
\begin{equation}
    \omega_C < \omega_Q = 1.
\end{equation}
Sometimes the term ``quantum pseudotelepathy'' \cite{GBT05} is used instead of BPQS. 
It is known that there are no BPQSs using qubit-qubit correlations; the simplest bipartite quantum system that allows for a BPQS is a qutrit-qutrit system \cite{Renner2004b}. It is also known that there are no BPQSs if $|X|=2$ \cite{Gisin:2007IJQI} or if $|A|=|B|=2$ \cite{CHTW04}. More recently, it has been proven that there are no BPQSs if $|X|=|Y|=3$, $|A|=4$, $|B|=2$ \cite{Liu:2023PRR}. The simplest example known of a BPQS is the magic square game \cite{Cabello:2001PRLa,Cabello:2001PRLb,Aravind:2004AJP} and requires $|X|=|Y|=3$, $|A|=|B|=4$. To define this game, it is convenient to take $x,y \in \{0,1,2\}$ and $a=({\cal A}, \alpha), b=({\cal B}, \beta) \in \{(1,1),(1,-1),(-1,1),(-1,-1)\}$.
The nine possible combinations of inputs are uniformly distributed. That is, $\pi(x,y)=\frac{1}{9}\ \forall (x,y)$.
The winning condition is 
\begin{equation}
    W = \left\{ \begin{array}{ll} 1 & \text{if }(x,y,{\cal A},\alpha,{\cal B},\beta)\text{ satisfies Table \ref{tab:wc-magicsquare},} \\ 0 & \text{otherwise.} \end{array} \right.
\end{equation}


\begin{table}[]
    \centering
    \begin{tabular}{ l l l l l l l }
     \hline \hline
     & \quad \quad \quad  & $y=0$ & \quad \quad \quad  & $y=1$ & \quad \quad \quad  & $y=2$ \\ \hline 
    $x=0$ & & ${\cal A}={\cal B}$ & & $\alpha={\cal B}$ & & ${\cal A} \alpha = {\cal B}$ \\ 
    $x=1$ & & ${\cal A}=\beta$ & & $\alpha=\beta$ & & ${\cal A} \alpha = \beta$ \\
    $x=2$ & & ${\cal A}={\cal B} \beta$ & & $ \alpha = {\cal B} \beta$ & & ${\cal A} \alpha = -{\cal B} \beta$ \\
    \hline \hline
  \end{tabular}
  \caption{The winning conditions of the magic square game.}
    \label{tab:wc-magicsquare}
\end{table}

For example, if $x=y=0$, the players win if and only if ${\cal A}={\cal B}$.
For this game \cite{Cabello:2001PRLb},
\begin{align}
\begin{split}
& \omega_C=\frac{8}{9} \approx 0.888, \\
& \omega_Q=1.
\end{split}
\end{align}

There are compelling arguments to conjecture that the magic square game is the bipartite nonlocal game with the minimum input cardinality that allows for a quantum BPQS \cite{Cabello:2023XXX}. As far as we know, the magic square correlation is the only BPQS that has been experimentally tested \cite{CinelliPRL2005,YangPRL2005,BarbieriOS2007,Aolita:2012PRA,Xu:2022PRL}. 


\subsection{Why are perfect quantum strategies important?} \label{sec:recent-results}


BPQSs have become important in the light of several recent results: 

\begin{enumerate}
    \item First, a quantum correlation $p(a,b|x,y)$ has local fraction $1$ \cite{Elitzur:1992PLA} or, equivalently, is ``fully nonlocal' \cite{Aolita:2012PRA} if and only if $p(a,b|x,y)$ is a BPQS \cite{Liu:2023PRR}.

    \item Second, a quantum correlation $p(a,b|x,y)$ is in a face of the nonsignaling polytope that does not contain local points \cite{Goh:2018PRA} if and only if $p(a,b|x,y)$ is a BPQS \cite{Liu:2023PRR}.

    \item Third, a quantum correlation $p(a,b|x,y)$ allows for a Greenberger-Horne-Zeilinger-like proof \cite{GHZ89} if and only if $p(a,b|x,y)$ is a BPQS \cite{Liu:2023PRR}.

    \item Fourth, a quantum correlation $p(a,b|x,y)$ violates Bell-like inequalities for arbitrary joint relaxations of the assumptions of measurement independence and parameter independence \cite{Shimony:1993} if and only if $p(a,b|x,y)$ is a BPQS or $\epsilon$-close to a BPQS \cite{Vieira;2024XXX}.

    \item Fifth, not all infinite-dimensional commuting correlations can be approximated by sequences of finite-dimensional tensor product correlations \cite{Ji:2021CACM}. The proof actually states that there is an infinite-dimensional commuting BPQS for a nonlocal game whose maximum probability of winning with tensor product correlation is $< 1/2$ \cite{Ji:2021CACM}. This result has deep consequences for physics \cite{cabello2023logicalpossibilitiesphysicsmipre}.

    \item Sixth, BPQSs help us to understand the origin of the bounds of quantum correlations. For example, the correlation providing the maximum quantum violation of the CHSH inequality can be ``extended'' to a BPQS in such a way that the latter dictates the limit (the Tsirelson bound \cite{Tsirelson:1980LMP}) of the former \cite{Cabello:2015PRL, Cabello:2017}. More generally, any Bell or Kochen-Specker (KS) contextuality scenario can be extended to a KS contextuality scenario in such a way that the fact that the correlations in the extended scenario attain $\omega_Q=1$ dictate the bounds in the initial scenario \cite{Cabello:2019PRA}.

    \item Seventh, the discovery that there are problems that can be solved by a quantum algorithm with certainty and in constant time for any input size, but require a time logarithmic in the length of the input for any classical circuit that solves them with a sufficiently high probability \cite{Bravyi:2018SCI} was based on perfect quantum strategies.

    \item Eighth, some quantum information protocols require BPQSs \cite{Cubitt:2010PRL,Horodecki:2010XXX,Vidick:2017XXX,Jain:2020IEE,Zhen:2023PRL,Bharti:2023XXX}.
\end{enumerate}


\subsection{Kochen-Specker sets}
\label{subsec:KSsets}


Kochen-Specker (KS) sets are sets of observables represented in quantum theory by rank-one projectors introduced in Ref. \cite{Kochen:1967JMM} for proving the impossibility of noncontextual hidden-variable interpretations of quantum theory. They are defined as follows.

\begin{definition}[KS set \cite{Kochen:1967JMM,Renner2004b,Pavicic:2005JPA}]
\label{def:gksset} 
A KS set in finite dimension $d \ge 3$ is a finite set of rank-one projectors $\mathcal{V}$ in a Hilbert space of dimension $d$ which does not admit an assignment $f: \mathcal{V} \rightarrow \{0,1\}$ satisfying: 

\indent I. $f(u) + f(v) \leq 1$ for each pair of orthogonal projectors $u, v \in \mathcal{V}$, 

\indent II. $\sum_{u \in b} f(u) = 1$ for every set $b \subset \mathcal{V}$ of mutually orthogonal projectors whose sum is the identity.
\end{definition}

Rank-one projectors can be identified with the one-dimensional subspace they project onto (which is a line). We can thus identify rank-one projectors with vectors, and sets of mutually orthogonal rank-one projectors whose sum is the identity with orthogonal bases. 
In this paper, it will be convenient to view a KS set as a pair $(\mathcal{V}, \mathcal{B})$, where $\mathcal{B}$ is the set of orthogonal bases (or, equivalently, the set of sets of mutually orthogonal projectors summing up to the identity) formed by the elements of $\mathcal{V}$. For simplicity, sometimes, we refer to the set of bases as a KS set.

\begin{definition}[Generalized KS set \cite{Xu:2022PRL}]
    The definition is identical to the definition of KS set, but removing the requirement that the projectors are of rank-one.
    \end{definition}

Generalized KS sets are called ``projective'' KS sets in Ref. \cite{MancinskaScarpaSeverini:2013} (and the term generalized KS set is used with a different meaning in Ref. \cite{MancinskaScarpaSeverini:2013}). For a finite-dimensional Hilbert space of dimension $d$, we let $\id_d$ denote the identity operator.

KS sets have been used to experimentally test state-independent contextuality \cite{Cabello:2008PRL,Badziag:2009PRL,Kirchmair:2009NAT,Amselem:2009PRL,D'Ambrosio:2013PRX}, and, some of them, can be certified with any full rank state \cite{xu2023stateindependent}, and used to self-test specific multipartite high-dimensional quantum states \cite{Saha:2025XXX}.


\subsection{Connections between BPQSs and KS sets}
\label{subsec:connections}


The first result that connects BPQSs and KS sets is due to Stairs \cite{Stairs:1983PS,Brown:1990FPH} and Heywood and Redhead \cite{HR83,Redhead:1987}, and has been refined by others \cite{Elby:1992PLA,Renner2004b,CHTW04,Aolita:2012PRA}. It states that a BPQS can be produced by measuring a KS set of dimension $d \ge 3$ on a qudit-qudit maximally entangled state. There are two versions of the result: Method I: each of the two players measures all the elements of $\mathcal{B}$ \cite{HR83}. Method II: one of the players measures all the elements of $\mathcal{B}$ and the other player measures all the elements of $\mathcal{V}$ \cite{Elby:1992PLA}. Therefore, in Method~I the resulting BPQS has input cardinality $|\mathcal{B}|$ and output cardinality $d$ for both players. In Method~II, one player has input cardinality $|\mathcal{B}|$ and output cardinality $d$, and the other player has input cardinality $|\mathcal{V}|$ and output cardinality $2$.

A more recent result establishes a stronger connection between BPQSs and KS sets \cite{Cabello:2023XXX}. 

\begin{theorem} \label{mt}
    A correlation $p(a,b|x,y)$ allows for a BPQS if and only if there is a bipartite-KS (B-KS) set associated with it.
\end{theorem}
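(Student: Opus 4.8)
The plan is to prove the two implications separately, unified by one idea: the winning condition of a bipartite nonlocal game translates into a collection of orthogonality and completeness relations among rank-one projectors, and a \emph{deterministic} classical strategy that wins every trial is nothing but a Kochen--Specker-type $\{0,1\}$ valuation on the rays that occur.

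\emph{From a B-KS set to a BPQS.} Given a B-KS set, I would build the nonlocal game whose inputs for Alice (resp.\ Bob) label her (resp.\ his) measurement contexts on $H_A$ (resp.\ $H_B$), whose outputs label the rays in each context, and whose winning condition declares an output pair losing precisely when Alice's rank-one projector $P^x_a$ on $H_A$ and the transpose $(Q^y_b)^{T}$ of Bob's rank-one projector (viewed as a projector on $H_A$ through the maximally entangled state $|\Phi_d\rangle$) are orthogonal. Measuring these projectors on $|\Phi_d\rangle$ and using the ricochet identity $(P\otimes\id)\,|\Phi_d\rangle=(\id\otimes P^{T})\,|\Phi_d\rangle$ gives $p(a,b\mid x,y)=\tfrac{1}{d}\,\mathrm{Tr}[P^x_a\,(Q^y_b)^{T}]$, which vanishes exactly on the losing pairs; hence $\omega_Q=1$. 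This is the bipartite analogue of Method~I. For the classical value: a deterministic strategy selects one output per input, and winning every trial forces these selections to be mutually consistent wherever two contexts overlap and across the Alice--Bob correspondence, i.e.\ to assemble into a single $\{0,1\}$ valuation satisfying conditions~I and~II of Definition~\ref{def:gksset} on the whole set. No such valuation exists, by the defining property of a B-KS set, so $\omega_C<1$ (any randomized strategy is a convex mixture of deterministic ones).

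\emph{From a BPQS to a B-KS set.} Here I would first reduce an arbitrary perfect strategy to a canonical form: purify the shared state, Naimark-dilate every POVM to a projective measurement, and restrict each party's Hilbert space to the support of its reduced state, so that the shared pure state $|\psi\rangle\in H_A\otimes H_B$ has full Schmidt rank and $H_A\cong H_B\cong\mathbb{C}^d$. Writing $|\psi\rangle=(\id\otimes S)\,|\Phi_d\rangle$ with $S$ invertible positive, perfectness says $\mathrm{Tr}[P^x_a\,S\,(Q^y_b)^{T}S^{\dagger}]=0$, hence $P^x_a\,S\,(Q^y_b)^{T}S^{\dagger}=0$, for every losing pair $(x,y,a,b)$. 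Once $S$ has been handled (see below), refining all of Alice's projectors $P^x_a$ and all of Bob's suitably conjugated projectors into rank-one pieces yields two families of rays together with their orthogonality relations (from the losing pairs) and completeness relations (each measurement resolves the identity), and one verifies that this data satisfies the definition of a B-KS set associated with $p$. Its non-colorability is once more the contrapositive of $\omega_C<1$: a valid valuation would furnish a deterministic classical strategy winning every trial of the game that the BPQS wins perfectly.

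\textbf{The main obstacle} is the handling of $S$: one must show that a perfect strategy may be taken to act on a \emph{maximally} entangled state (equivalently, $S\propto\id$), or else work with a definition of B-KS set loose enough to carry $|\psi\rangle$ along while remaining genuinely Kochen--Specker-type (non-colorable). For synchronous-type games this is a known rigidity/self-testing fact, but for a general nonlocal game one must exploit that \emph{all} input pairs are won simultaneously in order to constrain the Schmidt spectrum; doing this uniformly --- together with the routine but delicate bookkeeping of refining higher-rank measurement effects into rank-one projectors without spoiling any orthogonality or completeness relation --- is where the real work lies.
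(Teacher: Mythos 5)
Your forward direction (B-KS $\Rightarrow$ BPQS) is essentially the paper's: the orthogonality game, the maximally entangled state with the ricochet/transpose bookkeeping, and the identification of perfect deterministic classical strategies with $\{0,1\}$ assignments satisfying I$'$ and II$'$ of Definition~3 (you cite I and II of Definition~1, but only the cross-party conditions I$'$, II$'$ are needed and available). Note that the paper states Theorem~1 as imported from Ref.~[13] and only sketches the argument, so the comparison is with that sketch.

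The reverse direction is where you diverge, and the obstacle you flag --- ``handling of $S$,'' i.e.\ showing the shared state may be taken maximally entangled --- is a genuine gap in your route and, more importantly, an artifact of it. By insisting that the rays land back in the local space $\mathcal{H}_A\cong\mathbb{C}^d$, you are forced to control the Schmidt spectrum of $\ket{\psi}$, which is neither true for general perfect strategies nor needed. The paper's construction places the rays in the \emph{joint} Hilbert space: it takes the normalized collapsed states $\ket{\psi_{a|x}}\propto(\Pi_{a|x}\otimes\id_B)\ket{\psi}$ and $\ket{\psi_{b|y}}\propto(\id_A\otimes\Pi_{b|y})\ket{\psi}$, so that $\braket{\psi_{a|x}|\psi_{b|y}}\propto\bra{\psi}\Pi_{a|x}\otimes\Pi_{b|y}\ket{\psi}=p(a,b|x,y)$; the zeros of the correlation are then literally the Alice--Bob orthogonalities of the candidate B-KS set, each family $\{\ket{\psi_{a|x}}\}_a$ is mutually orthogonal and arises from a resolution of the identity $\sum_a\Pi_{a|x}\otimes\id_B=\id$, and no hypothesis on $\ket{\psi}$ beyond purity and projectivity of the measurements (both without loss of generality) is ever invoked. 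Non-colorability then closes as you indicate: a valid assignment would yield a deterministic strategy whose output pairs all lie in the support of $p$, hence winning every trial (since $\omega_Q=1$ forces every supported pair to be winning), contradicting $\omega_C<1$. Finally, the rank-one refinement you fold into your argument is not part of Theorem~1 at all --- Definition~3 explicitly allows higher-rank projectors --- and is handled separately by the paper's Theorem~2; conflating the two makes your proof carry a burden the statement does not impose.
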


\begin{definition}[Bipartite KS set]
\label{def:bks} 
 For a generalized KS set $(\mathcal{V}, \mathcal{B})$, and subsets $S_A, S_B \subseteq \mathcal{B}$, define 
\begin{equation}
   \Omega_{A, B} := \{(v, b) : b \in S_A \cup S_B, v \in b\}.
\end{equation}
We say that $S =(S_A, S_B)$ is a \emph{bipartite KS} set if there is no assignment $f : \Omega_{A,B} \rightarrow \{0,1\}$ satisfying:

 I'. $f(u,b) + f(u',b') \leq 1$ for each $b \in S_A, b' \in S_B, u \in b, u' \in b'$ with $u,u'$ orthogonal,
 
 II'. $\sum_{u \in b} f(u,b) = 1$ for each $b \in S_A$, and $\sum_{u' \in b'} f(u',b') = 1$ for each $b' \in S_B$.
\end{definition} 

In the case that $(\mathcal{V}, \mathcal{B})$ is a KS set, $S_A,S_B$ are simply sets of orthogonal bases of $\mathcal{B}$. Moreover, every B-KS set $(S_A, S_B)$ defines a generalized KS set where $\mathcal{B} = S_A \cup S_B$, and $\mathcal{V}$ is the set of projectors appearing in some basis of $\mathcal{B}$. Note however that it is possible to obtain a B-KS set $(S_A, S_B)$ from some KS set $K$, such that $S_A \cup S_B$ defines a \emph{different} KS set $K'$. 

On the other hand, given a KS set $(\mathcal{V}, \mathcal{B})$, there are many ways to obtain B-KS sets from $(\mathcal{V}, \mathcal{B})$.

Given a B-KS set $(S_A, S_B)$ in dimension $d$, it can be proven that there is a BPQS associated with it by using $(S_A, S_B)$ to construct a nonlocal bipartite game with a BPQS. The construction is as follows: With probability $\pi(x,y)$, Alice is given a basis $x \in X=S_A$ and Bob is given a basis $y \in Y=S_B$. Alice outputs a vector $a \in x$ and Bob outputs a vector $b \in y$. They win if and only if $a$ and $b$ are orthogonal. By definition of B-KS set, this game does not admit a perfect classical strategy. However, the following strategy is a BPQS: Alice and Bob share a qudit-qudit maximally entangled state in dimension $d$. For each $x$ ($y$), Alice (Bob) measures the orthogonal projectors onto each of the vectors in $x$ ($y$) and outputs the vector associated with the result that she (he) has obtained.

A sketch of the proof that every BPQS defines a B-KS set is the following. Suppose that $p(a,b|x,y)$ yields a BPQS. Every quantum correlation $p(a,b|x,y)$ in a finite-dimensional Hilbert space can be expressed as
\begin{equation}
    p(a,b|x,y) = \langle \psi | \Pi_{a|x} \otimes \Pi_{b|y} | \psi \rangle,
\end{equation}
where $\ket{\psi} \in \mathcal{H} = \mathcal{H}_A \otimes \mathcal{H}_B$ is a pure state, $\Pi_{a|x}$ are projective measurements for Alice, and $\Pi_{b|y}$ are projective measurements for Bob \cite{Brunner:2014RMP,Paddock:2022XXX}. 
Therefore, we can define $S_A := \{\{\ket{\psi_{a|x}} : a \in A\} : x \in X\}$, with
\begin{equation}
    \ket{\psi_{a|x}} :=\frac{(\Pi_{a|x} \otimes \id_B) |\psi\rangle}{\sqrt{\langle \psi | (\Pi_{a|x} \otimes \id_B) | \psi \rangle}},
\end{equation}
where $\id_B$ is the identity in $\mathcal{H}_B$,
and $S_B := \{\{\ket{\psi_{b|y}} : b \in B\} : y \in Y\}$, with
\begin{equation}
    \ket{\psi_{b|y}} :=\frac{(\id_A \otimes \Pi_{b|y}) |\psi\rangle}{\sqrt{\langle \psi | (\id_A \otimes \Pi_{b|y}) | \psi \rangle}},
\end{equation}
where $\id_A$ is the identity in $\mathcal{H}_A$.
If $p(a,b|x,y)$ is a BPQS, then no local hidden-variable model can explain the $0$'s in $p(a,b|x,y)$ (see Ref. \cite{Cabello:2023XXX} for details), which is equivalent to that $(S_A,S_B)$ is a B-KS set (see Ref. \cite{Cabello:2023XXX} for details).


\subsection{Aim and structure}


Only a few BPQSs with a small number of inputs are known \cite{Cabello:2001PRLb,Mancinska:2007}. 
Among them, only the magic square correlation has a small number of inputs and has been experimentally tested \cite{CinelliPRL2005,YangPRL2005,BarbieriOS2007,Aolita:2012PRA,Xu:2022PRL}. One reason why we do not have more examples is that the Methods I and II mentioned before for, given a KS set, producing a BPQS lead to BPQSs with many inputs. 
There is also a method for constructing BPQSs by ``parallelization'' \cite{Coladangelo2016arxiv,Coudron2016arxiv,Araujo:2020Quantum} in order to generate BPQSs in higher dimensions. This construction relies on pre-existing BPQSs and the dimension increases exponentially in the number of parallel copies taken so that BPQSs are produced only for certain dimensions (those of the form $d^N$, where $N$ is the number of parallel copies of a BPQS of dimension $d$).

Theorem~\ref{mt} says that every BPQS with input sets $X, Y$ has a B-KS set $(S_A, S_B)$ associated (in particular, of the same dimension, and satisfying $|S_A| = |X|, |S_B| = |Y|$). In Sec.~\ref{Sec:min-KS}, we prove that for each such B-KS set there exists some B-KS set $(S'_A, S'_B)$, of the same dimension, satisfying $|S_A| = |S'_A|, |S_B| = |S'_B|$, but consisting of rank-one projectors  (Theorem \ref{theo:min-KS}). Therefore, the study of allowable input cardinalities for BPQSs may be achieved through the study of those B-KS sets that arise from KS sets.

Our main goal is to identify BPQSs of small input cardinality for each dimension $d$, with the ultimate goal being to find the minimum for each $d$. We thus exploit the connection provided by Theorem \ref{theo:min-KS} and compute optimal B-KS sets for a variety of KS sets in small dimensions. This approach allows us to leverage more than 50 years of research in KS sets in order to find BPQSs of small input cardinality in each such dimension.

Thus, we take KS sets $(\mathcal{V}, \mathcal{B})$, and then compute the associated B-KS sets $(S_A, S_B)$ with $S_A, S_B \subseteq \mathcal{B}$ of minimum cardinality $|S_A||S_B|$. 

Finding such a B-KS set for a given KS set is not a straightforward task: there are (roughly) 
\begin{equation*}
    \binom{2^{|\mathcal{B}|}}{2}
\end{equation*}
pairs $(S_A, S_B)$ that are potential candidates. Moreover, it is nontrivial to check whether a given pair is indeed a B-KS set because this amounts to checking whether a polytope contains integer points (a problem which, in general, is NP complete \cite{Schrijver:2003}). 
Therefore, an important question is how to obtain the BPQS of minimum input cardinality associated with a given KS set. In Sec.~\ref{Sec:Algorithm}, we introduce an algorithm for this purpose.

Given that there are infinitely many KS sets, it is sensible to ask which ones should be tested.
Equation \eqref{eq:lowerbound} suggests that the BPQSs with the smallest input cardinalities allowed by quantum theory in each dimension will be associated with KS sets of relatively small cardinality. Additionally, optimal BPQSs may be rapidly computed in these scenarios. 

For these reasons, in Sec.~\ref{Sec:Correlations}, we apply the algorithm described in Sec.~\ref{Sec:Algorithm} to several famous KS sets of small input cardinality in dimensions 3 to 8. In all the cases, the algorithm produces BPQSs with smaller input cardinalities than with Methods I and II described in Sec.~\ref{subsec:connections}. Moreover, in each dimension, the algorithm either produces the best BPQS known or finds one with fewer inputs. This is discussed in Sec.~\ref{Sec:Discussion}, together with some open questions. 


\section{Kochen-Specker sets and bipartite perfect quantum strategies} \label{Sec:min-KS}


Here, we extend the connection between BPQSs and B-KS sets given by Theorem~\ref{mt} and use it to investigate the following problem: Given some finite dimension $d \geq 3$, what is the minimum input cardinality for which there exists a BPQS with local systems of dimension $d$.

\begin{definition}[Optimal B-KS set]
    A B-KS set $(S_A, S_B)$ 
    is \emph{optimal} if the product $|S_A||S_B|$ is as small as possible.
\end{definition}

Note that ``optimal'' here is in reference to the KS set $(\mathcal{V}, \mathcal{B})$ that the B-KS is obtained from, and not to the dimension $d$. The minimum input cardinality is thus the smallest size $|S_A||S_B|$ over all optimal B-KS sets $(S_A, S_B)$ in dimension $d$.


\subsection{KS sets are sufficient for minimum input cardinality BPQS}


We now show that the minimum input cardinality B-KS set for a given dimension $d$ will always be obtained from a KS set. If each of the projectors in a B-KS set $(S_A, S_B)$ are of rank $1$, we say that $(S_A, S_B)$ is a \emph{rank-one B-KS set}. For such a B-KS set, the set of projectors in $S_A \cup S_B$ form a KS set.

\begin{theorem} \label{theo:min-KS}
    Let $(S_A, S_B)$ be a B-KS set. Then, there exists a rank-one B-KS $(S'_A, S'_B)$ of the same dimension, and with $|S'_A| = |S_A|, |S'_B| = |S_B|.$
\end{theorem}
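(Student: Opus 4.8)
The plan is to build $(S'_A,S'_B)$ by \emph{refining} every projector of $(S_A,S_B)$ into rank-one projectors and then to check two things: (a) the refined object is still a B-KS set, and (b) the refinement can be chosen so that the number of bases on each side is not reduced. Setup: for each projector $P$ occurring in some basis of $S_A\cup S_B$, fix an orthonormal basis of its range and let $\mathrm{ref}(P)$ be the resulting set of rank-one projectors, so $\sum_{q\in\mathrm{ref}(P)}q=P$ and $\mathrm{ref}(P)=\{P\}$ when $P$ is already rank one. For a basis $b=\{P_1,\dots,P_k\}$ put $b':=\bigcup_i\mathrm{ref}(P_i)$; since the $\mathrm{ref}(P_i)$ are mutually orthogonal and $\sum_iP_i=\id_d$, the set $b'$ is an orthonormal basis of $d$ rank-one projectors. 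Then set $S'_A:=\{b':b\in S_A\}$ and $S'_B:=\{b':b\in S_B\}$; this is a rank-one candidate in the same dimension $d$.

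The substance of (a) is a \emph{coarse-graining} argument. Suppose, for contradiction, that $f':\Omega'_{A,B}\to\{0,1\}$ satisfies I$'$ and II$'$, where $\Omega'_{A,B}=\{(q,\tilde b):\tilde b\in S'_A\cup S'_B,\ q\in\tilde b\}$. I would define $f$ on $\Omega_{A,B}$ of the original B-KS set by
\begin{equation}
    f(u,b):=\sum_{q\in\mathrm{ref}(u)}f'(q,b'),
\end{equation}
noting $\mathrm{ref}(u)\subseteq b'$. The key point is that II$'$ forces exactly one projector of each refined basis to be assigned $1$, so at most one term of the sum is nonzero and $f$ is genuinely $\{0,1\}$-valued. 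Condition II$'$ for $f$ is then immediate, since $\sum_{u\in b}f(u,b)=\sum_{q\in b'}f'(q,b')=1$ for every $b\in S_A$ (and similarly for $S_B$). For I$'$, take $u\in b\in S_A$ and $u'\in c\in S_B$ with $u\perp u'$; then every element of $\mathrm{ref}(u)$ is orthogonal to every element of $\mathrm{ref}(u')$, so if $f(u,b)=1$ the witness $q^\ast\in\mathrm{ref}(u)$ with $f'(q^\ast,b')=1$ forces, through I$'$ for $f'$, that $f'(\cdot,c')$ vanishes on all of $\mathrm{ref}(u')$, whence $f(u',c)=0$. Hence $f$ satisfies I$'$ and II$'$, contradicting that $(S_A,S_B)$ is a B-KS set; so $(S'_A,S'_B)$ admits no valid assignment and is a rank-one B-KS set.

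For (b): refinement can only merge bases, so $|S'_A|\le|S_A|$ and $|S'_B|\le|S_B|$ for free, and it remains to choose the orthonormal bases $\mathrm{ref}(P)$ so that $b\mapsto b'$ is injective on $\mathcal{B}$ (hence on $S_A$ and on $S_B$). If all projectors of $S_A\cup S_B$ already have rank one there is nothing to do. Otherwise, for each of the finitely many pairs $b_1\neq b_2$, picking $P\in b_1\setminus b_2$, a collision $b_1'=b_2'$ would force $\mathrm{ref}(P)\subseteq b_2'$, with $b_2'$ a fixed finite set of rank-one projectors not depending on $\mathrm{ref}(P)$; this is a nowhere-dense condition on the refinement choices (when $\mathrm{rank}(P)\ge2$ this is clear; when $\mathrm{rank}(P)=1$ one uses that $P$ can lie in $b_2'$ only as a summand of a projector of rank $\ge2$ in $b_2$, whose refinement may be kept generic). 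Since a finite union of nowhere-dense sets is still nowhere-dense, a generic choice of the $\mathrm{ref}(P)$ avoids all collisions, giving $|S'_A|=|S_A|$ and $|S'_B|=|S_B|$. I expect this step (b) --- ruling out that refinement accidentally identifies two bases --- to be the main technical nuisance; the coarse-graining argument above is the conceptual content and goes through cleanly.
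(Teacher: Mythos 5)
Your proposal is correct and follows essentially the same route as the paper: refine each projector into an orthonormal basis of its range, and coarse-grain a hypothetical satisfying assignment $f'$ back to an $f$ on the original B-KS set via $f(u,b)=\sum_{q\in\mathrm{ref}(u)}f'(q,b')$ to reach a contradiction. Your step (b) addresses a point the paper passes over silently when it asserts a ``1-1 correspondence'' between $\mathcal{B}$ and $\mathcal{B}'$ --- namely that a careless choice of refinements could identify two distinct bases --- and your generic-choice argument (or, alternatively, padding $S'_A,S'_B$ with extra bases, since supersets of B-KS sets remain B-KS) legitimately closes that small gap.
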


\begin{proof}
Reference \cite[Appendix A]{MancinskaScarpaSeverini:2013} it is proven that one can always construct a KS set from any generalized KS set $(\mathcal{V}, \mathcal{B})$ (in fact one may, in principle, create infinitely many KS sets, but for our purposes a single one suffices). The construction is as follows: for each rank-$k$ projector $P \in \mathcal{V}$, let $Q$ be the $k$-dimensional subspace that $P$ projects onto, and let $\beta(P) := \{v_1, \dots, v_k\}$ be an orthonormal basis of $Q$. Then $\mathcal{V}' := \bigcup_{P \in \mathcal{V}} \beta(P)$ is a KS set. In words, we simply replace each projector $P$ by an orthonormal basis spanning the subspace projected on by $P$. 

Clearly, the generalized KS set and the KS set are of the same dimension. While there may be more sets of orthogonal projectors summing to the identity (these are orthogonal bases since all projectors are rank-one) that can be formed by the elements of $\mathcal{V}'$ than of $\mathcal{V}$, it is sufficient to consider only the set of orthogonal bases $\mathcal{B}'$ obtained by replacing projectors in the elements $b \in \mathcal{B}$. In other words, setting $\mathcal{B}' := \{\bigcup_{v \in b} \beta(v) : b \in \mathcal{B}\}$, we have that $(\mathcal{V}', \mathcal{B}')$ is a KS set.
This gives a 1-1 correspondence between sets of orthogonal projectors $\mathcal{B}$ summing to $\mathds{1}_d$ of the generalized KS set $(\mathcal{V}, \mathcal{B})$ and the orthogonal bases $\mathcal{B}'$ of the KS set $(\mathcal{V}', \mathcal{B}')$.

Let us consider a B-KS set $(S_A, S_B)$ obtained from $(\mathcal{V}, \mathcal{B})$. Construct the sets $(S'_A, S'_B) \subseteq \mathcal{B}'$, via
\begin{subequations}
\begin{align}
    S'_A &= \{\bigcup_{P \in b} \beta(P) : b \in S_A\},\\
    S'_B &= \{\bigcup_{P \in b} \beta(P) : b \in S_B\},
\end{align}
\end{subequations}
so that $S'_A$ ($S'_B$) is the set of bases in $\mathcal{B}'$ corresponding to the orthogonal bases in $S_A$ ($S_B$). We show that $(S'_A, S'_B)$ is a (rank-one) B-KS set.

Assume towards a contradiction that $(S'_A, S'_B)$ is not a B-KS set. Then there is some function $f' : \Omega_{A', B'}$ satisfying (I') and (II') of Definition \ref{def:bks}. Construct the function $f : \Omega_{A,B} \to \{0,1\}$ via 
\begin{equation}
    f(u,b) = \sum_{u' \in \beta(u)} f'(u',b') 
\end{equation}
where $b'$ is the orthogonal basis corresponding to orthogonal basis $b$. One may check that $f$ does indeed satisfy conditions (I') and (II') of Definition \ref{def:bks}. But this is a contradiction since we assumed that $(S_A, S_B)$ is a B-KS. Therefore, it follows that $(S'_A, S'_B)$ must be a B-KS if $(S_A, S_B)$ is. 
\end{proof}


\subsection{Resulting bounds}


The previous result allows us to obtain some simple bounds on
the input cardinalities $|X|, |Y|$ for BPQSs in some dimension $d.$
Whenever $(S_A, S_B)$ is a rank-one B-KS, it must be that 
\begin{equation}
    \{v : v \in b \text{ for some } b \in S_A \cup S_B\}
\end{equation}
is a KS set $\mathcal{V}$. Therefore, $|S_A \cup S_B| \geq |\mathcal{V}|/d$, and so 
\begin{equation}
    |S_A| + |S_B| \geq \frac{|\mathcal{V}|}{d}. \label{eq:lowerbound}
\end{equation}

\begin{corollary}
    Let $v_d$ be the smallest size of a KS set in dimension $d$. Then, for any B-KS $(S_A,S_B)$ of dimension $d$, we must have 
    \begin{equation}
        |S_A| + |S_B| \geq \frac{v_d}{d}.
    \end{equation}
    Equivalently, for any BPQS in dimension $d$ with input sets $X,Y$, we must have
    \begin{equation}
        |X| + |Y| \geq \frac{v_d}{d}.
    \end{equation}
\end{corollary}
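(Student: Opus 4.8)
The plan is to obtain the statement directly from Theorem~\ref{theo:min-KS} and the counting inequality \eqref{eq:lowerbound}, using Theorem~\ref{mt} for the reformulation in terms of BPQSs. First, I would take an arbitrary B-KS set $(S_A,S_B)$ of dimension $d$ and apply Theorem~\ref{theo:min-KS} to replace it by a rank-one B-KS set $(S'_A,S'_B)$ of the same dimension with $|S'_A|=|S_A|$ and $|S'_B|=|S_B|$. In particular $|S_A|+|S_B|=|S'_A|+|S'_B|$, so it suffices to establish the bound for rank-one B-KS sets, where every basis has exactly $d$ elements.

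Next I would show that the set of projectors occurring in $(S'_A,S'_B)$, namely $\mathcal{V}:=\{v : v\in b \text{ for some } b\in S'_A\cup S'_B\}$, is a genuine KS set in dimension $d$. Suppose, for contradiction, that $\mathcal{V}$ admitted an assignment $f:\mathcal{V}\to\{0,1\}$ satisfying conditions~I and~II of Definition~\ref{def:gksset}. Then $f'(u,b):=f(u)$ on $\Omega_{A',B'}$ satisfies (I') and (II') of Definition~\ref{def:bks}: (II') holds because each $b\in S'_A\cup S'_B$ is an orthogonal basis summing to $\id_d$, so II forces $\sum_{u\in b}f(u)=1$; and (I') holds because it only constrains orthogonal pairs of vectors, which is a special case of~I. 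This contradicts that $(S'_A,S'_B)$ is a B-KS set, so no such $f$ exists and $\mathcal{V}$ is a KS set.

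Now the counting finishes the first inequality. Since $|S'_A\cup S'_B|\le|S'_A|+|S'_B|$ and each basis contributes exactly $d$ vectors,
\begin{equation}
|\mathcal{V}|\;\le\;d\,\bigl|S'_A\cup S'_B\bigr|\;\le\;d\,\bigl(|S'_A|+|S'_B|\bigr),
\end{equation}
which is \eqref{eq:lowerbound}. As $\mathcal{V}$ is a KS set in dimension $d$ and $v_d$ is by definition the minimum size of such a set, $|\mathcal{V}|\ge v_d$, hence $|S_A|+|S_B|=|S'_A|+|S'_B|\ge|\mathcal{V}|/d\ge v_d/d$. For the BPQS reformulation, I would invoke Theorem~\ref{mt} (and the accompanying remark that the associated B-KS set has the same dimension and satisfies $|S_A|=|X|$, $|S_B|=|Y|$): any BPQS in dimension $d$ with input sets $X,Y$ has such an associated B-KS set, and the inequality just proved gives $|X|+|Y|\ge v_d/d$.

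I do not expect any substantive obstacle here, since the statement is a corollary of the preceding theorems. The only step that needs a line of justification is the claim that the projectors of a rank-one B-KS set form an honest KS set (not merely a generalized one); this is what licenses substituting $|\mathcal{V}|\ge v_d$, and it follows, as above, from the fact that a KS coloring of $\mathcal{V}$ would restrict to a valid B-KS coloring.
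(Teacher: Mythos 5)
Your proposal is correct and follows essentially the same route as the paper: reduce to a rank-one B-KS set via Theorem~\ref{theo:min-KS}, observe that its groundset is a KS set so that $|\mathcal{V}|\ge v_d$, combine with the counting bound $|\mathcal{V}|\le d(|S_A|+|S_B|)$ of Eq.~\eqref{eq:lowerbound}, and invoke Theorem~\ref{mt} for the BPQS reformulation. The only difference is that you explicitly verify the claim (asserted without proof in the paper) that the projectors of a rank-one B-KS set form a genuine KS set, and your verification is sound.
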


As an example, in dimension $3$ it is known that the smallest KS set must have at least $24$ vectors \cite{Kirchweger:2023,Li:2023XXX}, so in this case we must have $|X| + |Y| \geq 24/3 = 8$.

Additionally, whenever we have some rank-one B-KS $(S_A, S_B)$ in a given dimension, Eq. \eqref{eq:lowerbound} provides an upper bound on the sizes of KS sets that can yield smaller B-KS set (see Sec. \ref{sec:KP-40} for a concrete example).

\begin{remark}
Henceforth, all B-KS sets in this article are rank-one B-KS sets. For simplicity, we will refer to them as B-KS sets.
\end{remark}


\section{Algorithm}
\label{Sec:Algorithm}


Here, we describe an algorithm that takes as input a KS set $(\mathcal{V}, \mathcal{B})$ and outputs a rank-one B-KS set $(S_A, S_B)$ with $S_A, S_B \subseteq \mathcal{B}$ minimizing the product $|S_A||S_B|$. The key ingredient of the algorithm is a filter which identifies and removes subsets of $\mathcal{B}$ that cannot take part in any B-KS set. This allows one to dramatically reduce the number of pairs that must be checked.


\subsection{Checking whether $(S_A, S_B)$ is a B-KS set}


To check whether a given $(S_A, S_B)$ is a B-KS set, we recast the problem as the following integer linear program (ILP):
\begin{flushleft}%
\begin{mini*}{}{1} {\label{lp:canonical}}{}
    \addConstraint{\sum_{v \in b} w_{v,b} = 1}{\quad \forall b \in S_A}
    \addConstraint{\sum_{v' \in b'} w_{v',b'} = 1}{\quad \forall b' \in S_B}
        \addConstraint{w_{v,b} + w_{v',b'} \leq 1}{\quad \text{ whenever } <v,v'> = 0}
        \addConstraint{w_{v,b}},
        w_{v',b'}{\in \{0,1\}},{}
\end{mini*} 
\end{flushleft}
where $<\cdot, \cdot>$ denotes the standard inner product. The condition $\min 1$ is put artificially: the only goal of the ILP is to determine whether there is a point satisfying all of the constraints. In principle, we could replace $1$ with any linear function in the variables $w_{v,b}$ (henceforth just $w$). In essence, we are solving a feasibility problem, but just casting it as an ILP in order to use existing machinery (to be precise we use the GLPK solver in \emph{Sagemath} \cite{sagemath}).
Then, $(S_A, S_B)$ constitutes a B-KS set when the problem described above is infeasible (i.e., there is no $w$ satisfying the constraints). From a feasible solution, one obtains a function $f : \Omega_{A,B} \to \{0,1\}$ satisfying I' and II' of Definition \ref{def:bks} by setting $f(v,b) := w_{v,b}$ for each pair $(v,b) \in \Omega_{A,B}$.
At this point, one could, in principle, iterate over the 
\begin{equation*}
    \binom{2^{|\mathcal{B}|}}{2}
\end{equation*}
pairs $(S_A, S_B)$ in increasing order of cardinality $|S_A||S_B|$ outputting an optimal B-KS once it is clear that no B-KS sets of equal or higher cardinality remain to be checked. However, as we shall see, it is possible to significantly reduce the number of pairs that must be iterated through (and thus also the number of calls to the ILP).

We make the following observations. First, if $(S_A, S_B)$ is a B-KS set and $S_A \subseteq S'_A \subseteq \mathcal{B}, S_B \subseteq S'_B \subseteq \mathcal{B}$, then $(S'_A, S'_B)$ is also a B-KS set. Second, if $(\mathcal{V}, \mathcal{B})$ defines a KS set, we see (trivially) that $(\mathcal{B}, \mathcal{B})$ is a B-KS set. Third, if $(S_A, S_B)$ is a B-KS set, then so is $(S_B, S_A)$. Therefore, we always assume that $|S_A| \leq |S_B|$ without loss of generality.

Together, these observations allow us to conclude that, for a given subset $S_A \subseteq \mathcal{B}$, if $(S_A, \mathcal{B})$ is not a B-KS set, then $(S_A, S_B)$ (and also $(S_B, S_A)$) will not be a B-KS set for any choice of $S_B$. This leads to the following definitions:

\begin{definition}[B-KS capable subset of bases]
    A subset $S_A \subseteq \mathcal{B}$ is said to be \emph{B-KS capable} if $(S_A, \mathcal{B})$ is a B-KS set. 
\end{definition}

\begin{definition}[Essential subset of bases]
A B-KS capable subset $S_A \subseteq \mathcal{B}$ is said to be essential if there is no $S'_A \subseteq S_A$ which is also B-KS capable.
\end{definition}


\subsection{Computing optimal B-KS sets}


Algorithm \ref{alg:bks} takes as input a KS set and outputs an optimal B-KS set.
\begin{algorithm}
    \SetKwInOut{Input}{Input}
    \SetKwInOut{Output}{Output}
    \SetKw{Or}{or}
    \Input{A KS set $(\mathcal{V}, \mathcal{B})$}
    \Output{Optimal B-KS set $(S_A, S_B)$}
    
    $C \gets [\ ], k \gets 3, k_{\min} \gets |\mathcal{B}|, \omega \gets |\mathcal{B}|^2, b \gets \text{False}$\;
    
    \For{$S_A \subseteq \mathcal{B}$ \text{ with } $|S_A|=k$}{
        \If{$S_A$ is B-KS capable}{
            \If{$b = \text{False}$}{
                $k_{\min} \gets |S_A|$, $b \gets \text{True}$\;
            }
            \For{$S_B \in C$}{
                \If{$|S_A||S_B| > \omega$}{break\;} 
                \If{$(S_A, S_B)$ is B-KS}{
                    $(S_A^*, S_B^*) \gets (S_A, S_B)$\;
                    $\omega \gets |S_A^*||S_B^*|$\;
                    break\;
                }
            }
            $C \gets C + [S_A]$\tcp*{append $S_A$ to $C$}
        }
    }
    $k \gets k+1$\;
    \eIf{$k \cdot k_{\min} \geq \omega$}{
        \Return{$(S_A^*, S_B^*)$}\;
    }{
        \text{go to 2}\;
    }
    \caption{Compute an optimal B-KS set.}
    \label{alg:bks}
\end{algorithm}

We iterate over all subsets of $\mathcal{B}$ in order of increasing size in order to identify and compute the B-KS capable sets. We begin with size $k=3$, since there can be no B-KS capable sets of smaller size. The first time we encounter a B-KS capable set, we store its size (as $k_{\min}$), and mark that we have encountered a B-KS capable set (by setting the boolean $b$ to True). Each time we encounter a B-KS capable set $S_A$, we store it (in the list $C$) and test it against previously encountered B-KS capable sets $S_B$ in order to check whether they form a B-KS set. By design, the B-KS capable sets found are stored in order of increasing size. Therefore, we stop looping over $S_B \in C$ in two different scenarios: (i) if $|S_A||S_B|$ is at least as large as the size ($\omega$) of the best B-KS capable set found thus far, or (ii) if $(S_A,S_B)$ forms a B-KS set. In the case of (i), we continue to the following choice of $S_A$, and in the case of (ii) we mark the B-KS set $(S_A,S_B)$ as being optimal (by saving it as $(S_A^*, S_B^*)$ and setting $\omega$ to $|S_A^*||S_B^*|$) and continue to the next choice of $S_A$. Whenever the next choice of $S_A$ has strictly larger size than the previous, we check whether we have already encountered the optimum. This is done by verifying that $|S'_A||S'_B| \geq \omega$ for all the choices of $(S'_A,S'_B)$ that we would encounter. Since $|S'_A| \geq k$, and $|S_B| \geq k_{\min}$ this amounts to checking whether or not $k\cdot k_{\min} \geq \omega$ (which we perform in Line 21).

Let $k_{\max}$ denote the largest $k$ checked by the algorithm, so that $k_{\min}k_{\max}$ is at least as large as the size of an optimal B-KS set.
In the worst case, the algorithm is exponential in $|\mathcal{B}|$. The algorithm checks at most $\binom{\sum_{k=3}^{k_{\max}} \binom{|B|}{k}}{2}$ pairs $(S_A,S_B)$, and there are at most $\sum_{k=3}^{k_{\max}} \binom{|B|}{k}$ calls to the ILP \emph{``is BKS-capable''}, and at most $\binom{\sum_{k=3}^{k_{\max}} \binom{|B|}{k}}{2}$ calls to the ILP \emph{``is B-KS''}. In practice, the filter ``is BKS-capable'' removes a significant number of the pairs visited (and thus also to the calls to ``is BKS-capable''). As an example, one of the KS sets we study in Sec. \ref{Sec:Correlations}, CK-33, has $|\mathcal{B}| = 20$ orthogonal bases, $k_{\min} = 7$ and $k_{\max} = 13.$ Therefore, the bound we just gave would indicate that there are at most $\binom{\sum_{k=7}^{13} \binom{20}{k}}{2} = 4.3 \times 10^{11}$ calls to ``is BKS''. However, after filtering via ``is BKS-capable'', we find that there are at most $9.18 \times 10^{8}$ calls necessary --- three orders of magnitude less.

For the function ``is B-KS capable'' in Algorithm \ref{alg:bks}, one could test whether $(S_A, \mathcal{B})$ is a B-KS set via the B-KS ILP formulation, taking $S_B = \mathcal{B}$. However, we can simplify this formulation, as we shall see in the following section.


\subsection{Checking whether $S_A$ is B-KS capable}


For some $T \subseteq \mathcal{B}$, we define $\Gamma(T) := \{v : v \in b \text{ for some } b \in T\}$ and call this the \emph{groundset of $T$}. 
To check whether a given $(S_A, S_B)$ is B-KS capable, we recast the problem as the following ILP:


\begin{flushleft}%
\begin{mini*}{}{1}{}{} {\label{lp:canonical}}{}
	\addConstraint{\sum_{v \in b} w_{v} = 1}{\quad \forall b \in S_A}
    \addConstraint{\sum_{v' \in b'} w_{v'} \geq 1}{\quad \forall b' \in \mathcal{B} \setminus S_A}
        \addConstraint{w_{v} + w_{v'} \leq 1}{\quad \forall v \in \Gamma(A), v' \in \Gamma(\mathcal{B} \setminus S_A) \text{ with } <v,v'> = 0}
        \addConstraint{w_v \in \{0,1\} \text{ for each } v \in \mathcal{V}}.{}
\end{mini*}
\end{flushleft}


\subsection{Equivalence between the two formulations}


That the two formulations are equivalent (i.e., checking that $(S_A, \mathcal{B}$) is a B-KS set using the first ILP or checking that $S_A$ is a B-KS capable set using the second ILP) can be proven by showing that a feasible solution of one is a feasible solution of the other, and vice versa. 

\begin{proposition}
    Let $(\mathcal{V}, \mathcal{B})$ be a KS set, and let $S_A \subseteq \mathcal{B}$. Then, the B-KS ILP with $(S_A, \mathcal{B})$ has a feasible solution if and only if the B-KS capable ILP has a feasible solution with $S_A$.
\end{proposition}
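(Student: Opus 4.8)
The plan is to prove both implications by passing explicitly between feasible points, viewing each ILP as the problem of selecting one vector from each of certain bases. Two facts will be used repeatedly: every vector lies in some basis, so $\Gamma(S_A)\cup\Gamma(\mathcal{B}\setminus S_A)=\mathcal{V}$; and, since here $S_B=\mathcal{B}$, the B-KS ILP imposes the equality $\sum_{v\in b}w_{v,b}=1$ for \emph{every} $b\in\mathcal{B}$, while the B-KS capable ILP imposes equalities only on $S_A$ and the weaker inequalities $\sum_{v\in b}w_v\ge 1$ on the bases of $\mathcal{B}\setminus S_A$.

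First I would show that a feasible point of the B-KS ILP with $(S_A,\mathcal{B})$ yields one of the B-KS capable ILP. Given $\{w_{v,b}\}$, the equality constraints single out, for each $b\in\mathcal{B}$, the unique $v\in b$ with $w_{v,b}=1$; call it $\sigma(b)$, so that $\sigma\colon\mathcal{B}\to\mathcal{V}$ with $\sigma(b)\in b$, and the orthogonality constraints say exactly that $\sigma(b)\not\perp\sigma(b')$ whenever $b\in S_A$, $b'\in\mathcal{B}$. Put $w_v:=1$ if $v$ is in the image of $\sigma$, and $w_v:=0$ otherwise. The crucial point is that for $b\in S_A$ the vector $\sigma(b)$ is the only element of $b$ in the image of $\sigma$: if $u\in b$ and $u=\sigma(b'')$ for some $b''$, then $u\not\perp\sigma(b)$, and since $u$ and $\sigma(b)$ both lie in the orthogonal basis $b$ this forces $u=\sigma(b)$. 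Hence $\sum_{v\in b}w_v=1$ for $b\in S_A$; the inequality for each $b'\in\mathcal{B}\setminus S_A$ is witnessed by $\sigma(b')$; and for the orthogonality constraints of the capable ILP, if $v\in\Gamma(S_A)$, $v'\in\Gamma(\mathcal{B}\setminus S_A)$, $v\perp v'$ and $w_v=w_{v'}=1$, write $v=\sigma(b_1)$, $v'=\sigma(b_2)$, pick $b\in S_A$ with $v\in b$; the crucial point forces $\sigma(b)=v$, and then $\sigma(b)\not\perp\sigma(b_2)$ contradicts $v\perp v'$.

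For the converse, take a feasible $\{w_v\}$ of the B-KS capable ILP and lift it: set $w_{v,b}:=w_v$ for $b\in S_A$ (the equality constraint makes exactly one of these $1$), and for each $b'\in\mathcal{B}\setminus S_A$ pick one $v'\in b'$ with $w_{v'}=1$ — which exists by $\sum_{u\in b'}w_u\ge 1$ — and set $w_{v',b'}:=1$, $w_{u,b'}:=0$ for $u\in b'\setminus\{v'\}$. The equality constraints of the B-KS ILP hold by construction. For an orthogonality constraint $w_{v,b}+w_{v',b'}\le 1$ with $b\in S_A$, $v\perp v'$: when $b'\in\mathcal{B}\setminus S_A$ the term $w_{v',b'}$ is nonzero only if $v'$ is the chosen representative, so $w_{v'}=1$, and since $v\in\Gamma(S_A)$, $v'\in\Gamma(\mathcal{B}\setminus S_A)$ the capable orthogonality constraint forces $w_{v,b}=w_v=0$; when $b'\in S_A$, one must rule out two mutually orthogonal vectors being the chosen representatives of two bases of $S_A$.

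I expect this last case to be the only real obstacle, since the B-KS ILP with $S_B=\mathcal{B}$ records orthogonality between \emph{every} pair of bases of $S_A$ while the capable ILP only records it between $\Gamma(S_A)$ and $\Gamma(\mathcal{B}\setminus S_A)$. As in the forward direction, if $v\in b$ and $v'\in b'$ with $b,b'\in S_A$, $v\perp v'$ and $w_v=w_{v'}=1$, then neither $v$ nor $v'$ lies in $\Gamma(\mathcal{B}\setminus S_A)$, so both occur only in bases of $S_A$; one then has to derive a contradiction with capable-feasibility — using that $v'\in v^{\perp}=\operatorname{span}(b\setminus\{v\})$ while $w$ is forced to vanish on $b\setminus\{v\}$, which is immediate when $v'\in b$ and otherwise calls for re-choosing the representatives of the affected $S_A$-bases without disturbing the already-checked constraints. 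Making this re-routing watertight, plus the tedious bookkeeping caused by a vector appearing in many bases, is where the work lies; the rest is a mechanical comparison of the two constraint systems.
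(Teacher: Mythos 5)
Your forward direction is correct and is essentially the paper's own argument, repackaged through the choice function $\sigma$: your ``crucial point'' that $\sigma(b)$ is the unique element of a basis $b\in S_A$ lying in the image of $\sigma$ is exactly what the paper uses when it observes that $w_{v,b}$ takes a common value over all bases containing a fixed $v\in\Gamma(S_A)$.

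The reverse direction, however, is left genuinely open, and the case you flag is a real gap, not mere bookkeeping. For two orthogonal vectors $v,v'$ with $w_v=w_{v'}=1$ that each lie only in bases of $S_A$ and share no common basis, neither of your proposed escape routes works as stated. There is no ``contradiction with capable-feasibility'' to be extracted: the capable ILP only imposes orthogonality constraints between $\Gamma(S_A)$ and $\Gamma(\mathcal{B}\setminus S_A)$, and its equality constraints only forbid two $1$'s inside a single basis of $S_A$, so such an assignment can be perfectly feasible; the span observation $v'\in\operatorname{span}(b\setminus\{v\})$ buys nothing because the ILP is purely combinatorial and $v'$ need not appear in any basis meeting $b\setminus\{v\}$. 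Nor is ``re-choosing the representatives'' a local repair: since $\sum_{u\in b'}w_u=1$ for $b'\in S_A$, the vector $v'$ is the \emph{only} admissible representative of its basis, so one would have to modify the capable solution globally, with no argument that this terminates or preserves the constraints already checked. For what it is worth, the paper's proof of this direction performs the same lift $w''_{v,b}:=w'_v$ and simply asserts that the orthogonality constraints are satisfied, verifying them only (implicitly) for pairs with one vector in $\Gamma(\mathcal{B}\setminus S_A)$ --- precisely the case you also handle --- so you have correctly located the step on which the published argument is silent. The lift does go through whenever every orthogonal pair of $\mathcal{V}$ lies in some common basis of $\mathcal{B}$ (the common basis is either in $S_A$, where the equality constraint excludes the pair, or in $\mathcal{B}\setminus S_A$, where the capable orthogonality constraint applies), but neither your proposal nor the paper supplies an argument covering the general case, so as submitted your proof is incomplete.
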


\begin{proof}
Let $w := (w_{v,b})_{v \in b, b \in \mathcal{B}}$ be a feasible solution of the B-KS ILP formulation. We construct a solution $w'$ for the B-KS capable ILP formulation. For each of the vectors $v \in \Gamma(S_A)$, we must have that
\begin{equation}
    w_{v,b_1} = w_{v,b_2}
\end{equation}
for any choice of $b_1, b_2 \in \mathcal{B}$ with $v \in b_1 \cap b_2$. Therefore for each $v \in \Gamma(S_A)$, we may set the value of $w'_v$ to be this common value. Now consider some $v' \in \Gamma(\mathcal{B} \setminus S_A)$. If $w_{v',b_1} = w_{v',b_2}$ for each pair $b_1,b_2 \in \mathcal{B} \setminus S_A$, then set $w'_{v'}$ to be this common value. Otherwise, $w_{v',b} \neq w_{v',b'}$ for some pair $b,b' \in \mathcal{B} \setminus S_A$ with $v' \in b \cap b'$. By the orthogonality condition of the B-KS ILP formulation, $v'$ cannot be orthogonal to any $v \in \Gamma(S_A)$ with $w'_v = 1$. Thus setting $w'_{v'} = 1$ we satisfy each of the constraints of the B-KS capable set ILP formulation: each basis in $S_A$ has exactly one vector marked with a $1$, the bases in $\mathcal{B} \setminus S_A$ each have at least one vector marked with $1$, and no pair of vectors $v \in \Gamma(S_A), v' \notin \Gamma(\mathcal{B} \setminus S_A)$ marked with $1$ are orthogonal.

Now let us consider the reverse direction --- that is, we have a feasible solution $w' := (w_v)_{v \in \mathcal{V}}$ of the B-KS capable ILP. Consider $w''$ obtained by setting $w''_{v,b} := w'_{v}$ for each $v \in \mathcal{V}$. The first condition of the B-KS ILP formulation is satisfied by $w''$ as is the third. On the other hand the second condition may be violated in the case that $b' \in \mathcal{B} \setminus S_A$. To satisfy this condition, we can modify the infeasible solution $w''$ ``basis-by-basis'' to obtain a feasible solution $w$ by greedily setting $w_{v,b'} := 0$ for each of the vectors $v \in b'$ except for one. Finally, we complete the description of $w$ by setting $w_{v,b} := w_v$ for each $b \in S_A$. 
\end{proof}

Therefore, the two formulations are equivalent for checking B-KS capability: one ILP is infeasible if and only if the other ILP is. The advantage of the second formulation is that the variables now depend only on the vectors, and so this formulation reduces the number of variables for the ILP (taking account also the number of slack variables appearing for each inequality constraint).


\section{Bipartite perfect quantum strategies with minimum input cardinality for some KS sets}
\label{Sec:Correlations}


In this section, we apply the algorithm on some KS sets. In many cases, we are also able to compute all B-KS capable sets. The chosen KS sets are those of minimum (or small) cardinality known and extensions of them. The reason for considering extension is that, as we shall see, they sometimes produce BPQSs of smaller input cardinality. Additionally, it is easy to see that the BPQSs that the extensions produce must have input cardinality at least as small as the KS sets they contain.

In the cases when we compute all B-KS capable sets of some KS set $\mathcal{V}$, we also present the number of B-KS capable sets for each size up to isomorphism (see Table~\ref{tab:BKS-capable-iso}). Here, we view B-KS capable sets $S_A$ and $S_A'$ to be \emph{isomorphic} if $S_A'$ can be obtained from $S_A$ by a permutation of the vectors $\mathcal{V}$ that preserves orthogonality.


\subsection{CEG-18} \label{sec:CEG18}


The 18-vector KS set in $d=4$ \cite{Cabello:1996PLA}, hereafter called CEG-$18$, is the KS with the smallest cardinality in any $d$ allowed by quantum theory \cite{Xu:2022PRL}. It has $9$ orthogonal bases and is shown in the upper part of Table~\ref{tab:P24+kids}. As shown in Table~\ref{tab:P24+kids}, CEG-18 is a subset of the KS set P-24 (see \ref{sec:P24}). A KS set $K$ is \emph{critical} if no subset of $K$ is a KS set. Along with K-20 (see \ref{sec:K20}), CEG-18 is one of 6 critical KS sets contained in P-24 \cite{PavicicMegillMerlet2010}.

Using our algorithm, we have found that the associated optimal B-KS sets have $|S_A| = 5$ and $|S_B| = 6$. An example of such an optimal B-KS set is shown in Table~\ref{tab:P24+kids}.


\begin{table*}[h]
    \centering
    \begin{tabular}{ccrrrrrrrrrrrrrrrrrrr}
        \hline \hline
       & Size & 4 & 5 & 6 & 7 & 8 & 9 & 10 & 11 & 12 & 13 & 14 & 15 & 16 & 17 & 18 & 19 & 20 & 21 & 22 \\
     \hline
     CEG-18 & \# B-KS capable & & 1 & 2 & 2 & 1 & 1 & \\
     \ & \# Essential & & 1 & 1 & & & & & & & & & \\
     CK-37 & \# B-KS capable & & & & 2 & 44 & 410 & 1843 & 4904 & 8552 & 10347 & 9079 & 5960 & 2983 & 1150 & 350 & 87 & 18 & 4 & 1 \\
     \ & \# Essential & & & & 2 & 30 & 43 & 27 & 8 \\
     CK-31 & \# B-KS capable & & & & & 4 & 41 & 168 & 345 & 403 & 286 & 127 & 36 & 7 & 1 \\
     \ & \# Essential & & & & & 4 & 14 & 12 & & & & & \\
     CK-33 & \# B-KS capable & & & & 2 & 23 & 156 & 560 & 1227 & 1825 & 1953 & 1563 & 939 & 420 & 138 & 33 & 6 & 1 \\
     \ & \# Essential & & & & 2 & 11 & 16 & 1 & 2 & 3 & 2 & 2 & \\
     KP-36 & \# B-KS capable & & & 1 & 9 & 11 & 7 & 3 & 1 & & & & \\
     \ & \# Essential & & & 1 & 7 & 8 & & & & & & & \\
     P-33 & \# B-KS capable & & & & 1 & 5 & 19 & 38 & 49 & 48 & 31 & 13 & 4 & 1 \\
     \ & \# Essential & & & & 1 & 3 & 6 & 2 & & & & & \\
     K-20 & \# B-KS capable & 1 & 3  & 11 & 21 & 18 & 10 & 3 & 1 & & & & \\
     \ & \# Essential & 1 & 0 & 4 & 1 & & & & & & & & \\
     ZP-28 & \# B-KS capable & & & 2 & 10 & 26 & 33 & 33 & 18 & 9 & 3 & 1 & \\
     \ & \# Essential & & & 2 & 4 & 2 & & & & & & & \\
     S-29 & \# B-KS capable & & 1 & 18 & 141 & 595 & 1399 & 1872 & 1534 & 802 & 277 & 64 & 9 & 1 \\
     \ & \# Essential & & 1 & 7 & 28 & 57 & 6 & & & & & & \\
     S-7 & \# B-KS capable & & 1 & 32 & 387 & 2482 & 9641 & 24384 &  42897 & 54809 & 52398 & 38088 & 21175 & 8974 & 2861 & 669 & 111 & 12 & 1 \\
     \ & \# Essential & & 1 & 16 & 58 & 124 & 30 &\\
     \hline \hline
    \end{tabular}
    \caption{Number of B-KS capable and essential B-KS capable sets up to isomorphism of each size for each of the KS sets in Sec.~\ref{Sec:Correlations} with at most $22$ orthogonal bases.}
    \label{tab:BKS-capable-iso}
\end{table*}


\begin{table*}[h]
    \centering
    \begin{tabular}{crrrrrrrrrrrrrrrrrrrrrrrr}
    \hline \hline
     & $v_{1}$ & $v_{2}$ & $v_{3}$ & $v_{4}$ & $v_{5}$ & $v_{6}$ & $v_{7}$ & $v_{8}$ & $v_{9}$ & $v_{10}$ & $v_{11}$ & $v_{12}$ & $v_{13}$ & $v_{14}$ & $v_{15}$ & $v_{16}$ & $v_{17}$ & $v_{18}$ & $v_{19}$ & $v_{20}$ & $v_{21}$ & $v_{22}$ & $v_{23}$ & $v_{24}$ \\ 
    \hline
    $v_{i1}$ & $1$ & $1$ & $0$ & $0$ & $1$ & $0$ & $1$ & $0$ & $1$ & $1$ & $0$ & $0$ & $1$ & $0$ & $0$ & $0$ & $1$ & $1$ & $1$ & $1$ & $1$ & $1$ & $1$ & $-1$ \\
    $v_{i2}$ & $1$ & $-1$ & $0$ & $0$ & $0$ & $1$ & $0$ & $1$ & $0$ & $0$ & $1$ & $1$ & $0$ & $1$ & $0$ & $0$ & $1$ & $-1$ & $1$ & $-1$ & $1$ & $1$ & $-1$ & $1$ \\
    $v_{i3}$ & $0$ & $0$ & $1$ & $1$ & $1$ & $0$ & $-1$ & $0$ & $0$ & $0$ & $1$ & $-1$ & $0$ & $0$ & $1$ & $0$ & $1$ & $1$ & $-1$ & $-1$ & $1$ & $-1$ & $1$ & $1$ \\
    $v_{i4}$ & $0$ & $0$ & $1$ & $-1$ & $0$ & $1$ & $0$ & $-1$ & $1$ & $-1$ & $0$ & $0$ & $0$ & $0$ & $0$ & $1$ & $1$ & $-1$ & $-1$ & $1$ & $-1$ & $1$ & $1$ & $1$ \\
    \hline
     & $\dot{1}$ & $\dot{1}$ & $\dot{1}$ & $\dot{1}$ & ${16}$ & ${16}$ & ${12}$ & ${12}$ & ${23}$ & ${20}$ & ${20}$ & ${23}$ & $\mathring{9}$ & $\mathring{9}$ & $\mathring{9}$ & $\mathring{9}$ & ${12}$ & ${12}$ & ${16}$ & ${16}$ & ${23}$ & ${20}$ & ${20}$ & ${23}$ \\ 
     & ${14}$ & ${2}$ & ${14}$ & ${2}$ & ${24}$ & ${19}$ & ${19}$ & ${24}$ & ${7}$ & ${7}$ & ${11}$ & ${11}$ & ${11}$ & ${7}$ & ${7}$ & ${11}$ & ${2}$ & ${14}$ & ${2}$ & ${14}$ & ${19}$ & ${24}$ & ${19}$ & ${24}$ \\ 
    $\mathcal{B}_{24}$ & ${8}$ & ${8}$ & ${3}$ & ${3}$ & $\dot{4}$ & $\dot{4}$ & $\dot{4}$ & $\dot{4}$ & $\dot{5}$ & $\dot{5}$ & $\dot{5}$ & $\dot{5}$ & ${3}$ & ${3}$ & ${8}$ & ${8}$ & $\mathring{15}$ & $\mathring{15}$ & $\mathring{15}$ & $\mathring{15}$ & $\mathring{22}$ & $\mathring{22}$ & $\mathring{22}$ & $\mathring{22}$ \\ 
     & ${21}$ & ${18}$ & ${18}$ & ${21}$ & ${10}$ & ${6}$ & ${10}$ & ${6}$ & ${13}$ & ${17}$ & ${13}$ & ${17}$ & ${6}$ & ${10}$ & ${6}$ & ${10}$ & ${17}$ & ${13}$ & ${13}$ & ${17}$ & ${18}$ & ${18}$ & ${21}$ & ${21}$ \\ 
    \hline 
    &  &  &  &  & $\dot{16}$ & $\dot{16}$ & & &  & $\mathring{20}$ & $\mathring{20}$ & ${23}$ & $\dot{9}$ & $\dot{9}$ & $\dot{9}$ & $\dot{9}$ & & & $\dot{16}$ & $\dot{16}$ & & $\mathring{20}$ & $\mathring{20}$ & \\ 
     & & $\mathring{2}$ & & $\mathring{2}$ & $\mathring{24}$ & & & $\mathring{24}$ & & & $\mathring{11}$ & $\mathring{11}$ & $\mathring{11}$ & & & $\mathring{11}$ & $\mathring{2}$ & & $\mathring{2}$ & & & $\mathring{24}$ & & $\mathring{24}$ \\ 
    $\mathcal{B}_{20}$ &  & & $\mathring{3}$ & $\mathring{3}$ & & & & & &  &  &  & $\mathring{3}$ & $\mathring{3}$ & & & & & & & $\dot{22}$ & $\dot{22}$ & $\dot{22}$ & $\dot{22}$ \\ 
     &  & $\mathring{18}$ & $\mathring{18}$ & & & $\mathring{6}$ & & $\mathring{6}$ & & $\dot{17}$ & & $\dot{17}$ & $\mathring{6}$ & & $\mathring{6}$ & & $\dot{17}$ & & & $\dot{17}$ & $\mathring{18}$ & $\mathring{18}$ & & \\ 
    \hline 
    & & &  &  & & & $\mathring{12}$ & $\mathring{12}$ & $\ddot{23}$ & & & $\ddot{23}$ &  & & & & $\mathring{12}$ & $\mathring{12}$ & & & $\ddot{23}$ & & & $\ddot{23}$ \\ 
     & & $\ddot{2}$ & & $\ddot{2}$ & $\mathring{24}$ & & & $\mathring{24}$ & & & $\mathring{11}$ & $\mathring{11}$ & $\mathring{11}$ & & & $\mathring{11}$ & $\ddot{2}$ & & $\ddot{2}$ & & & $\mathring{24}$ & & $\mathring{24}$ \\ 
    $\mathcal{B}_{18}$ & & & $\mathring{3}$ & $\mathring{3}$ & & & & && & & & $\mathring{3}$ & $\mathring{3}$ && & & & & & & & & \\ 
     && $\dot{18}$ & $\dot{18}$ & & $\dot{10}$ & & $\dot{10}$ & & $\dot{13}$ & & $\dot{13}$ & & & $\dot{10}$ & & $\dot{10}$ & & $\dot{13}$ & $\dot{13}$ & & $\dot{18}$ & $\dot{18}$ & & \\ 
    \hline \hline
    \end{tabular}
    \caption{P-24, K-20, CEG-18, and their optimal B-KSs. The upper part of the table displays the $24$ vectors $v_i$ of P-24. Each column corresponds to one vector. Each row gives one component. The following three sections of the table show optimal B-KS sets for P-24, K-20, and CEG-18, respectively. P-24 has $|\mathcal{B}|=24$ bases, the bases are numbered from $1$ to $24$. Each vector of P-24 belongs to four orthogonal bases of P-24: those indicated in the second part of table in the column corresponding to $v_i$. The notation is the following: $\dot{1}$ indicates that basis $1$ is only in $S_A$. $\mathring{9}$ indicates that basis $9$ is only in $S_B$. $2$ indicates that basis $2$ is in neither $S_A$ nor $S_B$.
    Therefore, an optimal B-KS set for P-24 is $S_A = \{\dot{1}, \dot{4}, \dot{5}\}$, $S_B = \{\mathring{9}, \mathring{15}, \mathring{22}\}$. K-20 and CEG-18 are both subsets of P-24, so they each contain a subset of the vectors and bases of P-24. The third and fourth parts of the table display optimal B-KS sets for K-20 and CEG-18, respectively.In addition to the notation previously explained, $\ddot{2}$ indicates that the basis $2$ is both in $S_A$ and $S_B$ (for CEG-18). Therefore, an optimal B-KS set for K-20 is $S_A = \{\dot{9},\dot{16},\dot{17},\dot{22}\}, S_B = \{\mathring{2},\mathring{3},\mathring{6},\mathring{11},\mathring{18},\mathring{20},\mathring{24}\}$ and an optimal B-KS set for CEG-18 is $S_A = \{\ddot{2},\dot{10},\dot{13},\dot{18},\ddot{23}\}, S_B = \{\ddot{2},\mathring{3},\mathring{11},\mathring{12},\ddot{23},\mathring{24}\}$.}
    \label{tab:P24+kids}
\end{table*}


In total (i.e., before considering isomorphisms), there are 96 B-KS capable sets of which only the following $15$ are essential (the bases are defined in Table~\ref{tab:P24+kids}):
\[
\begin{array}{lll}
    C_1 = \{ 1, 2, 4, 7, 8\}, & 
    C_2 = \{ 1, 2, 5, 6, 9\}, \\
    C_3 = \{ 1, 3, 4, 5, 7\}, &
    C_4 = \{ 1, 3, 6, 8, 9\}, \\ 
    C_5 = \{ 1, 4, 6, 7, 9\}, &
    C_6 = \{ 2, 3, 4, 6, 8\}, \\ 
    C_7 = \{ 2, 3, 5, 7, 9\}, & 
    C_8 = \{ 2, 4, 5, 8, 9\}, \\
    C_9 = \{ 3, 5, 6, 7, 8\}, &
    C_{10} = \{ 1, 2, 4, 5, 7, 9\}, \\
    C_{11} = \{ 1, 2, 4, 6, 8, 9\}, & 
    C_{12} = \{ 1, 3, 4, 6, 7, 8\}, \\ 
    C_{13} = \{ 1, 3, 5, 6, 7, 9\}, &
    C_{14} = \{ 2, 3, 4, 5, 7, 8\}, \\
    C_{15} = \{ 2, 3, 5, 6, 8, 9\}.
\end{array}
\]
Table~\ref{tab:BKS-capable-iso} shows the number of B-KS capable sets of each size up to isomorphism. From the 15 essential B-KS capable sets $C_1, \dots, C_{15}$, there are only 2 isomorphism classes: one of size~5 (containing $C_1$ through $C_9$) and one of size~6 (containing $C_{10}$ through $C_{15}$).


\subsection{P-24} \label{sec:P24}


Peres' 24-vector KS set in $d=4$ \cite{Peres:1991JPA}, hereafter called P-24, has $24$ orthogonal bases. The KS set, as well as one of its optimal B-KS sets (with $|S_A| = |S_B| = 3$) are shown in Table~\ref{tab:P24+kids}.

This B-KS set has $|S_A||S_B|$ minimum over all B-KS sets of any dimension $d$, and therefore corresponds to the BPQS of minimum input cardinality \cite{Cabello:2023XXX}.


\subsection{K-20} \label{sec:K20}


Kernaghan's 20 vector KS set in $d=4$ \cite{Kernaghan:1994JPA}, hereafter called K-20, has $11$ orthogonal bases. Similarly to CEG-18, K-20 is a critical KS set that is a subset of P-24. K-20 and one of its optimal B-KS sets (with $|S_A| = 4, |S_B| = 7$) are shown in Table~\ref{tab:P24+kids}.

In total, K-20 has $465$ B-KS capable sets of which $36$ are essential. 
The number of B-KS capable sets of each size are shown, up to isomorphism, in Table~\ref{tab:BKS-capable-iso}. We remark that the sequence of essential B-KS has a ``gap'': the number of essential B-KS sets of sizes 4 and 6 are both nonzero, but the number of essential B-KS sets of size 5 is zero. This is the only case we have encountered thus far that exhibits this behavior.


\subsection{Pen-40}


Penrose's 40-vector KS set in $d=4$ \cite{Penrose:2000}, hereafter called Pen-40, is a complex KS set. 20 of its vectors (referred to as \emph{explicit} vectors) are described by taking the Majorana map (see Ref. \cite{Zimba:1993SHPS} for more details) of the directions corresponding to vertices of a regular dodecahedron inscribed in the unit sphere. The other 20 vectors (referred to as \emph{implicit vectors}) are obtained by ``completing'' mutually orthogonal sets of size three to orthogonal bases of $\mathbb{C}^4$. In total Pen-40 has 40 orthogonal bases.

As in Ref. \cite{Cabello:1996}, we describe these vectors purely by their orthogonal relationships (see Table~\ref{tab:P40-orthogonalities}) - referring to the explicit vectors as $A, B, \dots, J, A^*, B^*, \dots, J^*$ and the implicit vectors as $a, b, \dots, j, a^*, b^*, \dots, j^*$. Pen-40 is illustrated in Table~\ref{tab:Pen40+soleheir}. 


\begin{table*}[h]
\centering
\begin{tabular}{ll}
\hline \hline
\text{Vector} & \text{Orthogonal vectors} \\
\hline
\texttt{A}   & \{\texttt{A*}, \texttt{C}, \texttt{D}, \texttt{G}, \texttt{H*}, \texttt{I*}, \texttt{J}, \texttt{a}, \texttt{a*}, \texttt{b}, \texttt{e}, \texttt{f}\} \\
\texttt{A*}  & \{\texttt{C*}, \texttt{D*}, \texttt{G*}, \texttt{H}, \texttt{I}, \texttt{J*}, \texttt{a}, \texttt{a*}, \texttt{b*}, \texttt{e*}, \texttt{f*}\} \\
\texttt{B}   & \{\texttt{B*}, \texttt{D}, \texttt{E}, \texttt{F}, \texttt{H}, \texttt{I*}, \texttt{J*}, \texttt{a}, \texttt{b}, \texttt{b*}, \texttt{c}, \texttt{g}\} \\
\texttt{B*}  & \{\texttt{D*}, \texttt{E*}, \texttt{F*}, \texttt{H*}, \texttt{I}, \texttt{J}, \texttt{a*}, \texttt{b}, \texttt{b*}, \texttt{c*}, \texttt{g*}\} \\
\texttt{C}   & \{\texttt{C*}, \texttt{E}, \texttt{F*}, \texttt{G}, \texttt{I}, \texttt{J*}, \texttt{b}, \texttt{c}, \texttt{c*}, \texttt{d}, \texttt{h}\} \\
\texttt{C*}  & \{\texttt{E*}, \texttt{F}, \texttt{G*}, \texttt{I*}, \texttt{J}, \texttt{b*}, \texttt{c}, \texttt{c*}, \texttt{d*}, \texttt{h*}\} \\
\texttt{D}   & \{\texttt{D*}, \texttt{F*}, \texttt{G*}, \texttt{H}, \texttt{J}, \texttt{c}, \texttt{d}, \texttt{d*}, \texttt{e}, \texttt{i}\} \\
\texttt{D*}  & \{\texttt{F}, \texttt{G}, \texttt{H*}, \texttt{J*}, \texttt{c*}, \texttt{d}, \texttt{d*}, \texttt{e*}, \texttt{i*}\} \\
\texttt{E}   & \{\texttt{E*}, \texttt{F}, \texttt{G*}, \texttt{H*}, \texttt{I}, \texttt{a}, \texttt{d}, \texttt{e}, \texttt{e*}, \texttt{j}\} \\
\texttt{E*}  & \{\texttt{F*}, \texttt{G}, \texttt{H}, \texttt{I*}, \texttt{a*}, \texttt{d*}, \texttt{e}, \texttt{e*}, \texttt{j*}\} \\
\texttt{F}   & \{\texttt{F*}, \texttt{G}, \texttt{J}, \texttt{a}, \texttt{f}, \texttt{f*}, \texttt{h*}, \texttt{i*}\} \\
\texttt{F*}  & \{\texttt{G*}, \texttt{J*}, \texttt{a*}, \texttt{f}, \texttt{f*}, \texttt{h}, \texttt{i}\} \\
\texttt{G}   & \{\texttt{G*}, \texttt{H}, \texttt{b}, \texttt{g}, \texttt{g*}, \texttt{i*}, \texttt{j*}\} \\
\texttt{G*}  & \{\texttt{H*}, \texttt{b*}, \texttt{g}, \texttt{g*}, \texttt{i}, \texttt{j}\} \\
\texttt{H}   & \{\texttt{H*}, \texttt{I}, \texttt{c}, \texttt{f*}, \texttt{h}, \texttt{h*}, \texttt{j*}\} \\
\texttt{H*}  & \{\texttt{I*}, \texttt{c*}, \texttt{f}, \texttt{h}, \texttt{h*}, \texttt{j}\} \\
\texttt{I}   & \{\texttt{I*}, \texttt{J}, \texttt{d}, \texttt{f*}, \texttt{g*}, \texttt{i}, \texttt{i*}\} \\
\texttt{I*}  & \{\texttt{J*}, \texttt{d*}, \texttt{f}, \texttt{g}, \texttt{i}, \texttt{i*}\} \\
\texttt{J}   & \{\texttt{J*}, \texttt{e}, \texttt{g*}, \texttt{h*}, \texttt{j}, \texttt{j*}\} \\
\texttt{J*}  & \{\texttt{e*}, \texttt{g}, \texttt{h}, \texttt{j}, \texttt{j*}\} \\
\texttt{a}   & \{\texttt{a*}, \texttt{c*}, \texttt{d*}, \texttt{g*}, \texttt{h}, \texttt{i}, \texttt{j*}\} \\
\texttt{a*}  & \{\texttt{c}, \texttt{d}, \texttt{g}, \texttt{h*}, \texttt{i*}, \texttt{j}\} \\
\texttt{b}   & \{\texttt{b*}, \texttt{d*}, \texttt{e*}, \texttt{f*}, \texttt{h*}, \texttt{i}, \texttt{j}\} \\
\texttt{b*}  & \{\texttt{d}, \texttt{e}, \texttt{f}, \texttt{h}, \texttt{i*}, \texttt{j*}\} \\
\texttt{c}   & \{\texttt{c*}, \texttt{e*}, \texttt{f}, \texttt{g*}, \texttt{i*}, \texttt{j}\} \\
\texttt{c*}  & \{\texttt{e}, \texttt{f*}, \texttt{g}, \texttt{i}, \texttt{j*}\} \\
\texttt{d}   & \{\texttt{d*}, \texttt{f}, \texttt{g}, \texttt{h*}, \texttt{j*}\} \\
\texttt{d*}  & \{\texttt{f*}, \texttt{g*}, \texttt{h}, \texttt{j}\} \\
\texttt{e}   & \{\texttt{e*}, \texttt{f*}, \texttt{g}, \texttt{h}, \texttt{i*}\} \\
\texttt{e*}  & \{\texttt{f}, \texttt{g*}, \texttt{h*}, \texttt{i}\} \\
\texttt{f}   & \{\texttt{f*}, \texttt{g*}, \texttt{j*}\} \\
\texttt{f*}  & \{\texttt{g}, \texttt{j}\} \\
\texttt{g}   & \{\texttt{g*}, \texttt{h*}\} \\
\texttt{g*}  & \{\texttt{h}\} \\
\texttt{h}   & \{\texttt{h*},\texttt{i*}\} \\
\texttt{h*}   & \{\texttt{i}\} \\
\texttt{i}   & \{\texttt{i*,j*}\} \\
\texttt{i*}   & \{\texttt{j}\} \\
\texttt{j}   & \{\texttt{j*}\} \\
\texttt{j*}   &  \{\ \}\\
\hline \hline
\end{tabular}
\caption{Orthogonalities of P-40. For each vector, we only list the orthogonal vectors appearing below it in the table in order to avoid redundancy.}
\label{tab:P40-orthogonalities}
\end{table*}


\begin{table*}[h]
    \centering
    {\scalebox{0.8}{\begin{tabular}{crrrrrrrrrrrrrrrrrrrrrrrrrrrrrrrrrrrrrrrr}
    \hline \hline
    \ & $A$ & $B$ & $C$ & $D$ & $E$ & $F$ & $G$ & $H$ & $I$ & $J$ & $A*$ & $B*$ & $C*$ & $D*$ & $E*$ & $F*$ & $G*$ & $H*$ & $I*$ & $J*$ & $a$ & $b$ & $c$ & $d$ & $e$ & $f$ & $g$ & $h$ & $i$ & $j$ & $a*$ & $b*$ & $c*$ & $d*$ & $e*$ & $f*$ & $g*$ & $h*$ & $i*$ & $j*$ \\
    \hline
     & ${25}$ & $\ddot{1}$ & ${25}$ & $\ddot{1}$ &  & ${34}$ & ${25}$ & $\ddot{1}$ &  & ${37}$ &  & $\mathring{17}$ & ${21}$ & $\mathring{17}$ & ${21}$ & ${34}$ &  & $\mathring{17}$ & ${21}$ & ${37}$ &  & ${25}$ & $\ddot{1}$ & ${40}$ & $\mathring{33}$ & ${34}$ & ${40}$ & $\mathring{33}$ &  & ${37}$ & ${40}$ & $\mathring{33}$ & $\mathring{17}$ & ${21}$ &  & ${34}$ &  & ${40}$ & $\mathring{33}$ & ${37}$ \\ 
     & ${30}$ &  & $\ddot{2}$ & ${30}$ & $\ddot{2}$ &  & $\mathring{14}$ & ${16}$ & $\ddot{2}$ & ${30}$ & ${8}$ & ${38}$ &  & ${8}$ & ${38}$ & ${38}$ & $\mathring{14}$ & ${16}$ &  & ${8}$ & ${24}$ & $\mathring{28}$ &  & $\ddot{2}$ & ${30}$ &  & $\mathring{14}$ & ${16}$ & ${24}$ & $\mathring{28}$ & ${38}$ &  & ${24}$ & $\mathring{28}$ & ${8}$ & $\mathring{28}$ & $\mathring{14}$ & ${16}$ &  & ${24}$ \\ 
    $\mathcal{B}_{40}$ & $\dot{9}$ & ${26}$ &  &  & ${15}$ & $\ddot{3}$ &  &  & ${20}$ & $\ddot{3}$ & $\dot{9}$ & ${26}$ & $\ddot{3}$ &  &  &  & ${15}$ & ${15}$ & ${20}$ &  & $\dot{9}$ & ${26}$ & ${35}$ &  &  & ${35}$ &  &  & ${20}$ & ${15}$ & $\dot{9}$ & ${26}$ &  &  & ${35}$ &  & ${35}$ & $\ddot{3}$ & ${20}$ &  \\ 
     &  & ${19}$ & ${29}$ & ${13}$ & ${31}$ & ${4}$ & ${4}$ & ${6}$ & ${6}$ &  & ${6}$ &  & ${29}$ & ${4}$ & ${31}$ & ${13}$ & ${13}$ &  & ${19}$ & ${19}$ & ${23}$ &  & ${29}$ & ${36}$ & ${31}$ & ${36}$ & ${19}$ & ${23}$ & ${13}$ &  &  & ${36}$ & ${29}$ & ${23}$ & ${31}$ & ${6}$ & ${23}$ &  & ${4}$ & ${36}$ \\ 
     & ${18}$ & $\mathring{22}$ & ${12}$ & ${11}$ & $\mathring{22}$ & $\mathring{22}$ & ${5}$ & ${5}$ & ${10}$ & ${10}$ & ${7}$ & ${10}$ & ${7}$ & ${11}$ & ${5}$ & ${12}$ & ${7}$ & ${18}$ & ${18}$ & ${12}$ & $\mathring{22}$ & ${27}$ & ${39}$ & ${11}$ & ${32}$ & ${18}$ & ${32}$ & ${12}$ & ${27}$ & ${39}$ & ${39}$ & ${7}$ & ${32}$ & ${11}$ & ${27}$ & ${32}$ & ${10}$ & ${27}$ & ${39}$ & ${5}$ \\
     \hline
     &  & $\dot{1}$ &  & $\dot{1}$ &  &  &  & $\dot{1}$ &  & $\ddot{37}$ &  & ${17}$ &  & ${17}$ &  &  &  & ${17}$ &  & $\ddot{37}$ &  &  & $\dot{1}$ &  & & & & &  & $\ddot{37}$ &  & & ${17}$ & &  & &  & &  & $\ddot{37}$ \\ 
     &  &  &  &  &  &  &  &  &  &  &  & $\mathring{38}$ &  &  & $\mathring{38}$ & $\mathring{38}$ &  &  &  &  & $\mathring{24}$ & &  & & &  & & & $\mathring{24}$ &  & $\mathring{38}$ &  & $\mathring{24}$ & & & & & &  & $\mathring{24}$ \\ 
    $\mathcal{B}_{28}$ & $\ddot{9}$ &  &  &  & $\mathring{15}$ &  &  &  & $\ddot{20}$ &  & $\ddot{9}$ &  &  &  &  &  & $\mathring{15}$ & $\mathring{15}$ & $\ddot{20}$ &  & $\ddot{9}$ &  &  &  &  & &  &  & $\ddot{20}$ & $\mathring{15}$ & $\ddot{9}$ & &  &  & &  & & & $\ddot{20}$ &  \\ 
     &  &  & $\dot{29}$ & ${13}$ &  & $\dot{4}$ & $\dot{4}$ &  &  &  &  &  & $\dot{29}$ & $\dot{4}$ &  & ${13}$ & ${13}$ &  &  &  &  &  & $\dot{29}$ & & & & & & ${13}$ &  &  & & $\dot{29}$ & & & & &  & $\dot{4}$ &  \\ 
     &  & $\mathring{22}$ &  &  & $\mathring{22}$ & $\mathring{22}$ & $\mathring{5}$ & $\mathring{5}$ &  &  &  &  &  &  & $\mathring{5}$ &  &  &  &  &  & $\mathring{22}$ & & ${39}$ & & & & & &  & ${39}$ & ${39}$ & &  & & & & & & ${39}$ & $\mathring{5}$ \\ 
    \hline \hline
    \end{tabular}
    }}
\caption{Pen-40, ZP-28 and optimal B-KS of ZP-28. The top row indicates the names of each of the (implicitly defined) vectors of Pen-40. See \ref{tab:P40-orthogonalities} for their orthogonalities. The following two sections illustrate optimal B-KS sets for Pen-40 and ZP-28, respectively. Pen-40 has $40$, labeled from 1 to 40. Each vector is in exactly 4 orthogonal bases. The notation is the following: $\dot{9}$ indicates that basis $1$ is in $S_A$, but not $S_B$. $\mathring{14}$ indicates that basis $14$ is in $S_B$, but not $S_A$. $25$ indicates that basis $4$ is in neither $S_A$ nor $S_B$, and $\ddot{1}$ indicates that $1$ is in both $S_A$ and $S_B$.
ZP-28 is a subset of Pen-40, and so its vectors and orthogonal bases are subsets of the vectors and orthogonal bases of Pen-40. Therefore, the optimal B-KS set described is $S_A = \{\ddot{1}, \ddot{2}, \ddot{3}, \dot{9}\}, S_B = \{\ddot{1}, \ddot{2}, \ddot{3}, 33, 14, 17, 22, 28\}$.
The third (bottom) part of the table indicates the orthogonal bases of ZP-28 and an optimal B-KS $S_A = \{\dot{1}, \dot{4}, \ddot{9}, \ddot{20}, \dot{29}, \ddot{37}\}, S_B = \{\mathring{5}, \ddot{9}, \mathring{15}, \ddot{20}, \mathring{22}, \mathring{24}, \ddot{37}, \mathring{38}\}.$
$\dot{1}$ indicates that basis 1 is only in $S_A$. $\mathring{15}$ indicates that basis 15 is only $S_B$. $\ddot{9}$ indicates that basis 9 is in both $S_A$ and $S_B$. $17$ indicates that basis 17 is in neither $S_A$ nor $S_B$.}
    \label{tab:Pen40+soleheir}
\end{table*}

Optimal B-KS sets $(S_A,S_B)$ produced by Pen-40 have $|S_A| = 4, |S_B| = 8$. In this case (as for MP-57), Algorithm 1 was insufficient to compute the optimum. Instead, we first identified that four was the minimum value of $|S_A|$, then computed that there are four such B-KS capable sets up to isomorphism ($\{1,2,3,8\}, \{1,2,3,9\}, \{1,2,3,24\}, \{1,2,14,33\}$). We then checked against each of these four sets and found that there no B-KS sets with $|S_B| = 7$, but there are B-KS sets with $|S_B|= 8$. We applied a similar technique to show that there are no solutions with $|S_A| = |S_B| = 5.$


\subsection{ZP-28} \label{sec:ZP28}


Zimba-Penrose's 28-vector KS set in $d=4$ \cite{Zimba:1993SHPS} is a critical KS set formed by removing 12 implicit vectors of Pen-40 such that the remaining 8 implicit vectors correspond to one of the five cubes (see Ref. \cite{Cabello:1996} for details) inscribed in the dodecahedron. One possible choice of implicit vectors to keep is $\{a, c, i, j, a^*, c^*, i^*, j^*\}$. 
There are 14 orthogonal bases remaining from Pen-40, and optimal B-KS sets for ZP-28 have $|S_A| = 6, |S_B| = 8$. ZP-28 along with one of its optimal B-KS set are illustrated in Table~\ref{tab:Pen40+soleheir}.

We have been able to enumerate all $2020$ B-KS capable sets, of which $96$ are essential. Table~\ref{tab:BKS-capable-iso} illustrates the number of B-KS capable sets of ZP-28 by size up to isomorphism. 


\subsection{CK-31}


Conway and Kochen's 31-vector KS set in $d=3$ \cite{Peres:1993} is a critical KS set and is currently the smallest (in terms of number of vectors) known KS set in dimension $3$. It has $17$ orthogonal bases. The optimal B-KS sets produced in this case have $|S_A| = 8, |S_B| = 9$. CK-31 along with an optimal B-KS set are illustrated in Table~\ref{tab:CK37+kids}.

We were able to compute each of the $5252$ B-KS capable sets of which $88$ are essential. The number of such sets of each size is given, up to isomorphism, in
Table~\ref{tab:BKS-capable-iso}.


\subsection{CK-37} \label{sec:CK-37}


Conway and Kochen's 37-vector set in $d=3$ \cite{Peres:1993} has $22$ orthogonal bases. It is an extension of CK-31 (indeed, there are $6$ copies of CK-31 in CK-37). The KS set, as well as an optimal B-KS set (with $|S_A| = 8, |S_B| = 9$ just as in the CK-31 case) are shown in Table~\ref{tab:CK37+kids}.


\begin{table*}[h]
    \centering
    {\scalebox{0.85}{
    \begin{tabular}{crrrrrrrrrrrrrrrrrrrrrrrrrrrrrrrrrrrrr}
    \hline \hline
     & $v_{1}$ & $v_{2}$ & $v_{3}$ & $v_{4}$ & $v_{5}$ & $v_{6}$ & $v_{7}$ & $v_{8}$ & $v_{9}$ & $v_{10}$ & $v_{11}$ & $v_{12}$ & $v_{13}$ & $v_{14}$ & $v_{15}$ & $v_{16}$ & $v_{17}$ & $v_{18}$ & $v_{19}$ & $v_{20}$ & $v_{21}$ & $v_{22}$ & $v_{23}$ & $v_{24}$ & $v_{25}$ & $v_{26}$ & $v_{27}$ & $v_{28}$ & $v_{29}$ & $v_{30}$ & $v_{31}$ & $v_{32}$ & $v_{33}$ & $v_{34}$ & $v_{35}$ & $v_{36}$ & $v_{37}$ \\ 
    \hline
    $v_{i1}$ & $0$ & $0$ & $1$ & $1$ & $1$ & $1$ & $0$ & $0$ & $0$ & $0$ & $1$ & $2$ & $2$ & $1$ & $1$ & $2$ & $1$ & $1$ & $-2$ & $1$ & $1$ & $2$ & $-1$ & $1$ & $1$ & $1$ & $-1$ & $1$ & $1$ & $0$ & $0$ & $2$ & $2$ & $2$ & $-1$ & $1$ & $1$ \\
    $v_{i2}$ & $1$ & $1$ & $0$ & $0$ & $-1$ & $1$ & $2$ & $1$ & $2$ & $1$ & $0$ & $0$ & $0$ & $0$ & $2$ & $-1$ & $1$ & $1$ & $1$ & $1$ & $1$ & $-1$ & $1$ & $-1$ & $2$ & $-1$ & $1$ & $2$ & $0$ & $1$ & $0$ & $1$ & $1$ & $1$ & $2$ & $-2$ & $-2$ \\
    $v_{i3}$ & $-1$ & $1$ & $-1$ & $1$ & $0$ & $0$ & $1$ & $-2$ & $-1$ & $2$ & $2$ & $-1$ & $1$ & $-2$ & $0$ & $0$ & $-2$ & $1$ & $1$ & $2$ & $-1$ & $1$ & $2$ & $1$ & $1$ & $2$ & $1$ & $-1$ & $0$ & $0$ & $1$ & $1$ & $0$ & $-1$ & $1$ & $0$ & $1$ \\
    \hline 
    & $\dot{16}$ & $\dot{16}$ & $\dot{1}$ & $\dot{1}$ & $\mathring{8}$ & $\mathring{14}$ &  &  &  &  &  &  &  &  & $\dot{3}$ & $\dot{3}$ &  &  &  & $\mathring{8}$ & $\mathring{8}$ &  &  &  &  & $\mathring{14}$ & $\mathring{14}$ &  & $\dot{16}$ & $\dot{1}$ & $\dot{3}$ &  &  &  &  &  &  \\ 
     & $\mathring{6}$ & $\mathring{9}$ & $\mathring{10}$ & $\mathring{13}$ & $\dot{2}$ & $\dot{2}$ & $\dot{17}$ & $\dot{17}$ &  &  & $\dot{20}$ & $\dot{20}$ &  &  &  &  &  & $\mathring{6}$ & $\mathring{6}$ &  & $\mathring{9}$ & $\mathring{9}$ &  & $\mathring{10}$ & $\mathring{10}$ &  & $\mathring{13}$ & $\mathring{13}$ & $\dot{17}$ & $\dot{20}$ & $\dot{2}$ &  &  &  &  &  &  \\ 
    $\mathcal{B}_{37}$ &  & ${12}$ & ${4}$ &  &  &  &  &  & $\dot{18}$ & $\dot{18}$ &  &  & $\dot{21}$ & $\dot{21}$ &  &  &  & ${4}$ &  &  &  &  &  & ${12}$ &  &  &  &  & $\dot{18}$ & $\dot{21}$ & ${22}$ &  & ${22}$ & ${12}$ &  & ${22}$ & ${4}$ \\ 
     & ${15}$ &  &  & ${7}$ & $\mathring{5}$ & $\mathring{11}$ &  &  &  &  &  &  &  &  &  &  & $\mathring{5}$ & $\mathring{5}$ &  &  & ${7}$ &  & $\mathring{11}$ & $\mathring{11}$ &  &  & ${15}$ &  & $\mathring{19}$ & $\mathring{19}$ & $\mathring{19}$ & ${15}$ &  &  & ${7}$ &  & \\
     \hline  
     & $\dot{16}$ & $\dot{16}$ & $\dot{1}$ & $\dot{1}$ & $\mathring{8}$ & $\mathring{14}$ &  &  &  &  &  &  &  &  & $\dot{3}$ & $\dot{3}$ &  &  &  & $\mathring{8}$ & $\mathring{8}$ &  &  &  &  & $\mathring{14}$ & $\mathring{14}$ &  & $\dot{16}$ & $\dot{1}$ & $\dot{3}$ &  &  &  &  &  &  \\ 
     & $\mathring{6}$ & $\mathring{9}$ & $\mathring{10}$ & $\mathring{13}$ & $\dot{2}$ & $\dot{2}$ & &  &  &  & $\dot{20}$ & $\dot{20}$ &  &  &  &  &  & $\mathring{6}$ & $\mathring{6}$ &  & $\mathring{9}$ & $\mathring{9}$ &  & $\mathring{10}$ & $\mathring{10}$ &  & $\mathring{13}$ & $\mathring{13}$ &  & $\dot{20}$ & $\dot{2}$ &  &  &  &  &  &  \\ 
    $\mathcal{B}_{33}$ &  & $\mathring{12}$ & $\mathring{4}$ &  &  &  &  &  &  &  &  &  & $\dot{21}$ & $\dot{21}$ &  &  &  & $\mathring{4}$ &  &  &  &  &  & $\mathring{12}$ &  &  &  &  &  & $\dot{21}$ & $\dot{22}$ &  & $\dot{22}$ & $\mathring{12}$ &  & $\dot{22}$ & $\mathring{4}$ \\ 
     & $\mathring{15}$ &  &  & $\mathring{7}$ & $\mathring{5}$ & $\mathring{11}$ &  &  &  &  &  &  &  &  &  &  & $\mathring{5}$ & $\mathring{5}$ &  &  & $\mathring{7}$ &  & $\mathring{11}$ & $\mathring{11}$ &  &  & $\mathring{15}$ &  & $\mathring{19}$ & $\mathring{19}$ & $\mathring{19}$ & $\mathring{15}$ &  &  & $\mathring{7}$ &  &  \\
     \hline 
     & $\dot{16}$ & $\dot{16}$ & $\dot{1}$ & $\dot{1}$ & $\mathring{8}$ & $\mathring{14}$ &  &  &  &  &  &  &  &  & $\dot{3}$ & $\dot{3}$ &  &  &  & $\mathring{8}$ & $\mathring{8}$ &  &  &  &  & $\mathring{14}$ & $\mathring{14}$ &  & $\dot{16}$ & $\dot{1}$ & $\dot{3}$ &  &  &  &  &  &  \\ 
     & $\mathring{6}$ & $\mathring{9}$ & $\mathring{10}$ & $\mathring{13}$ & $\dot{2}$ & $\dot{2}$ & $\dot{17}$ & $\dot{17}$ &  &  & $\dot{20}$ & $\dot{20}$ &  &  &  &  &  & $\mathring{6}$ & $\mathring{6}$ &  & $\mathring{9}$ & $\mathring{9}$ &  & $\mathring{10}$ & $\mathring{10}$ &  & $\mathring{13}$ & $\mathring{13}$ & $\dot{17}$ & $\dot{20}$ & $\dot{2}$ &  &  &  &  &  &  \\ 
    $\mathcal{B}_{31}$ &  &  &  &  &  &  &  &  & $\dot{18}$ & $\dot{18}$ &  &  & $\dot{21}$ & $\dot{21}$ &  &  &  &  &  &  &  &  &  &  &  &  &  &  & $\dot{18}$ & $\dot{21}$ &  &  & & &  &  &  \\ 
     &  &  &  &  & $\mathring{5}$ & $\mathring{11}$ &  &  &  &  &  &  &  &  &  &  & $\mathring{5}$ & $\mathring{5}$ &  &  &  &  & $\mathring{11}$ & $\mathring{11}$ &  &  &  &  & $\mathring{19}$ & $\mathring{19}$ & $\mathring{19}$ &  &  &  &  &  &  \\
    \hline \hline
    \end{tabular}
    }}
\caption{CK-37, CK-33, CK-31, and their optimal B-KSs. The upper part of the table displays the $37$ vectors of CK-37, with each column corresponding to a vector and each row to a component. The second part of the table presents an optimal B-KS. CK-37 has $|\mathcal{B}| = 22$ orthogonal bases numbered from $1$ to $37$. The bases containing each vector are indicated in the second part of table in the column corresponding to that vector. $\dot{16}$ indicates that basis $16$ is only in $S_A$. $\mathring{8}$ indicates that basis $8$ is only in $S_B$. $22$ indicates that basis $22$ is in neither $S_A$ nor $S_B$. 
The optimal B-KS set illustrated is thus $S_A = \{\dot{1}, \dot{2}, \dot{3}, \dot{16}, \dot{17}, \dot{18}, \dot{20}, \dot{21}\}, S_B = \{\mathring{5}, \mathring{6}, \mathring{8}, \mathring{9}, \mathring{10}, \mathring{11}, \mathring{13}, \mathring{14}, \mathring{19}\}.$ The KS set CK-33 is obtained from CK-37 by removing vectors $v_7, \dots, v_{10}$ (see the third part of the table), and the KS set CK-31 is obtained from CK-37 by removing vectors $v_{32}, \dots, v_{37}$ (see the bottom part of the table). In the case of CK-31, we obtain the same optimal B-KS set as for CK-37. However, in the case of CK-33, we obtain the optimal B-KS set $S_A = \{\dot{1}, \dot{2}, \dot{3}, \dot{16}, \dot{20}, \dot{21}, \dot{22}\}, S_B = \{\mathring{4}, \mathring{5}, \mathring{6}, \mathring{7}, \mathring{8}, \mathring{9}, \mathring{10}, \mathring{11}, \mathring{12}, \mathring{13}, \mathring{14}, \mathring{15}, \mathring{19}\}$. }
    \label{tab:CK37+kids}
\end{table*}


We were able to compute the set of all $1060326$ B-KS capable sets of which $2127$ are essential.
The number of such sets of each size is given up to isomorphism in Table~\ref{tab:BKS-capable-iso}. We note that the optimal B-KS sets do not use an essential B-KS capable set $S_A$ of minimum size ($7$) - indeed, the smallest B-KS set using such an $S_A$ has $|S_B| = 13$.


\subsection{CK-33}


Conway and Kochen's 33-vector KS set in $d=3$ \cite{Peres:1993}, hereafter called CK-33 (or Sch\"{u}tte-33 \cite{Bub:1996FP}), is also a  critical KS set that it is a subset of CK-37 (there are 3 copies of CK-33 within CK-37). 
It has $20$ orthogonal bases, and its optimal B-KS sets have $|S_A| = 7, |S_B| = 13$, larger than that of CK-37 and CK-31. Table \ref{tab:CK37+kids} illustrates CK-33, as well as one of its optimal B-KS sets. 

We were able to compute all $65298$ B-KS capable sets of which $185$ are essential. We illustrate the number of sets of each size up to isomorphism in Table~\ref{tab:BKS-capable-iso}.


\subsection{P-33}


Peres' 33-vector set in $d=3$ \cite{Peres:1991JPA}, hereafter called P-33 is a critical KS set. It has $16$ orthogonal bases. The KS set, as well as one of its optimal B-KS sets (with $|S_A| = 7, |S_B| = 9$) are illustrated in Table~\ref{tab:MP57+soleheir}. P-33 is a member of a family of KS sets sharing the same orthogonality graph \cite{Penrose:2000,gould2010isomorphism,bengtsson2012gleason}.

It has been conjectured \cite[Conjecture 2]{Cabello:2023XXX} that this yields the qutrit-qutrit BPQS with the smallest $|S_A| |S_B|$. 

We have computed each of the 2008 B-KS capable sets, of which 76 are essential. 
Table~\ref{tab:BKS-capable-iso} lists the number of B-KS capable sets of each size up to isomorphism.


\subsection{MP-57}


In Ref. \cite{Mancinska:2007}, Man\v{c}inska constructed the ``completion'' of P-33, adding vectors so that each pair of orthogonal vectors is in some orthogonal basis.
The resulting 57-vector KS set (henceforth MP-57) has 40 orthogonal bases. 

The optimal B-KS sets of MP-57 are the same size as for P-33 ($|S_A| = 7, |S_B| = 9$). We illustrate the KS set and one of its optimal B-KS sets in Table~\ref{tab:MP57+soleheir}. 

To compute the optimal B-KS sets of MP-57, we took a similar approach to the Pen-40 case, applying a modification of Algorithm~\ref{alg:bks} which takes advantage of symmetry (isomorphisms). More explicitly, when computing B-KS capable sets, we only store one representative from each isomorphism class. In this way, we are able to significantly reduce the number of times that we need to call the function ``is B-KS.''


\begin{table*}[h]
    \centering
    {\scalebox{0.46}{
    \begin{tabular}{cccccccccccccccccccccccccccccccccccccccccccccccccccccccccc}
    \hline \hline
     & $v_{1}$ & $v_{2}$ & $v_{3}$ & $v_{4}$ & $v_{5}$ & $v_{6}$ & $v_{7}$ & $v_{8}$ & $v_{9}$ & $v_{10}$ & $v_{11}$ & $v_{12}$ & $v_{13}$ & $v_{14}$ & $v_{15}$ & $v_{16}$ & $v_{17}$ & $v_{18}$ & $v_{19}$ & $v_{20}$ & $v_{21}$ & $v_{22}$ & $v_{23}$ & $v_{24}$ & $v_{25}$ & $v_{26}$ & $v_{27}$ & $v_{28}$ & $v_{29}$ & $v_{30}$ & $v_{31}$ & $v_{32}$ & $v_{33}$ & $v_{34}$ & $v_{35}$ & $v_{36}$ & $v_{37}$ & $v_{38}$ & $v_{39}$ & $v_{40}$ & $v_{41}$ & $v_{42}$ & $v_{43}$ & $v_{44}$ & $v_{45}$ & $v_{46}$ & $v_{47}$ & $v_{48}$ & $v_{49}$ & $v_{50}$ & $v_{51}$ & $v_{52}$ & $v_{53}$ & $v_{54}$ & $v_{55}$ & $v_{56}$ & $v_{57}$ \\ 
    \hline
    $v_{i1}$ & $1$ & $0$ & $0$ & $0$ & $1$ & $1$ & $0$ & $1$ & $1$ & $0$ & $0$ & $0$ & $0$ & $1$ & $t$ & $1$ & $t$ & $1$ & $t$ & $1$ & $t$ & $1$ & $-1$ & $1$ & $1$ & $1$ & $-1$ & $1$ & $1$ & $t$ & $t$ & $t$ & $-t$ & $1$ & $1$ & $1$ & $1$ & $1$ & $1$ & $1$ & $1$ & $1$ & $1$ & $1$ & $1$ & $1$ & $1$ & $1$ & $1$ & $1$ & $1$ & $1$ & $1$ & $1$ & $1$ & $1$ & $1$ \\
    $v_{i2}$ & $0$ & $1$ & $1$ & $1$ & $0$ & $0$ & $0$ & $1$ & $-1$ & $1$ & $t$ & $1$ & $t$ & $0$ & $0$ & $0$ & $0$ & $t$ & $-1$ & $-t$ & $1$ & $-1$ & $1$ & $1$ & $1$ & $t$ & $t$ & $t$ & $-t$ & $1$ & $-1$ & $1$ & $1$ & $-t/3$ & $t/3$ & $1/3$ & $-1/3$ & $t/3$ & $-t/3$ & $1/3$ & $-1/3$ & $-3t/2$ & $t/2$ & $3t/2$ & $-t/2$ & $-3t/2$ & $t/2$ & $3t/2$ & $-t/2$ & $3$ & $-3$ & $3$ & $-3$ & $t$ & $-t$ & $t$ & $-t$ \\
    $v_{i3}$ & $0$ & $1$ & $-1$ & $0$ & $1$ & $-1$ & $1$ & $0$ & $0$ & $t$ & $-1$ & $-t$ & $1$ & $t$ & $-1$ & $-t$ & $1$ & $0$ & $0$ & $0$ & $0$ & $t$ & $t$ & $t$ & $-t$ & $-1$ & $1$ & $1$ & $1$ & $-1$ & $1$ & $1$ & $1$ & $1/3$ & $-1/3$ & $t/3$ & $-t/3$ & $1/3$ & $-1/3$ & $-t/3$ & $t/3$ & $-t/2$ & $3t/2$ & $t/2$ & $-3t/2$ & $t/2$ & $-3t/2$ & $-t/2$ & $3t/2$ & $t$ & $t$ & $-t$ & $-t$ & $3$ & $3$ & $-3$ & $-3$ \\
    \hline
     & $\mathring{1}$ & $\mathring{1}$ & $\mathring{1}$ & $\mathring{21}$ &  & $\dot{5}$ & $\mathring{24}$ &  &  & ${29}$ & ${11}$ & ${14}$ & ${17}$ & ${31}$ & ${19}$ & $\mathring{21}$ & $\mathring{21}$ & $\mathring{24}$ & $\mathring{24}$ & ${37}$ &  & ${19}$ & ${11}$ &  & ${17}$ & ${29}$ & ${14}$ & $\dot{5}$ & $\dot{5}$ & ${31}$ &  & ${37}$ &  & ${29}$ &  & ${11}$ &  & ${14}$ &  &  & ${17}$ & ${31}$ &  &  &  &  & ${37}$ &  &  & ${19}$ &  &  &  &  &  &  &  \\ 
     & $\dot{2}$ &  &  & $\dot{2}$ &  &  & $\dot{2}$ &  &  &  & ${12}$ &  &  & ${39}$ &  &  & ${23}$ & ${35}$ &  &  & ${28}$ &  &  & ${12}$ & ${23}$ &  &  &  & ${28}$ &  & ${35}$ &  & ${39}$ &  &  &  & ${12}$ &  &  &  &  &  &  &  & ${35}$ &  &  & ${39}$ &  &  &  &  & ${23}$ &  &  &  & ${28}$ \\ 
    $\mathcal{B}_{57}$ & $\mathring{10}$ &  & $\dot{38}$ & $\mathring{3}$ & $\mathring{3}$ & $\mathring{3}$ & $\mathring{6}$ & $\mathring{6}$ & $\mathring{6}$ & $\mathring{10}$ & $\mathring{10}$ & ${15}$ & ${16}$ &  & ${20}$ & ${34}$ & ${22}$ &  & ${30}$ & ${32}$ & ${27}$ & ${16}$ & ${22}$ & ${20}$ &  & ${30}$ & ${27}$ & ${15}$ &  & ${32}$ & ${34}$ & $\dot{38}$ & $\dot{38}$ &  &  &  &  &  & ${15}$ & ${16}$ &  &  & ${32}$ & ${34}$ &  &  &  &  &  &  & ${20}$ & ${22}$ &  & ${30}$ & ${27}$ &  &  \\ 
     & $\mathring{13}$ & $\dot{33}$ &  & $\mathring{18}$ & $\dot{4}$ &  & $\mathring{26}$ & $\dot{7}$ & $\dot{8}$ & ${9}$ &  & $\mathring{13}$ & $\mathring{13}$ & $\mathring{18}$ & $\mathring{18}$ & ${36}$ &  & ${40}$ & ${25}$ & $\mathring{26}$ & $\mathring{26}$ & $\dot{7}$ & $\dot{7}$ & $\dot{8}$ & $\dot{8}$ & $\dot{4}$ & $\dot{4}$ & ${25}$ & ${9}$ & $\dot{33}$ & $\dot{33}$ & ${36}$ & ${40}$ &  & ${9}$ &  &  &  &  &  &  &  &  &  &  & ${36}$ &  &  & ${40}$ &  &  &  &  &  &  & ${25}$ &  \\ 
     \hline
     & $\mathring{1}$ & $\mathring{1}$ & $\mathring{1}$ & $\mathring{21}$ &  & $\dot{5}$ & $\mathring{24}$ &  &  & ${29}$ & ${11}$ & ${14}$ & ${17}$ & ${31}$ & ${19}$ & $\mathring{21}$ & $\mathring{21}$ & $\mathring{24}$ & $\mathring{24}$ & ${37}$ &  & ${19}$ & ${11}$ &  & ${17}$ & ${29}$ & ${14}$ & $\dot{5}$ & $\dot{5}$ & ${31}$ &  & ${37}$ &  &  &  & &  &&  &  & & &  &  &  &  & &  &  & &  &  &  &  &  &  &  \\ 
     & $\dot{2}$ &  &  & $\dot{2}$ &  &  & $\dot{2}$ &  &  &  & ${12}$ &  &  & ${39}$ &  &  & ${23}$ & ${35}$ &  &  & ${28}$ &  &  & ${12}$ & ${23}$ &  &  &  & ${28}$ &  & ${35}$ &  & ${39}$ &  &  &  & &  &  &  &  &  &  &  & &  &  & &  &  &  &  & &  &  &  &  \\ 
    $\mathcal{B}_{33}$ & $\mathring{10}$ &  & $\dot{38}$ & $\mathring{3}$ & $\mathring{3}$ & $\mathring{3}$ & $\mathring{6}$ & $\mathring{6}$ & $\mathring{6}$ & $\mathring{10}$ & $\mathring{10}$ & ${15}$ & ${16}$ &  & ${20}$ & ${34}$ & ${22}$ &  & ${30}$ & ${32}$ & ${27}$ & ${16}$ & ${22}$ & ${20}$ &  & ${30}$ & ${27}$ & ${15}$ &  & ${32}$ & ${34}$ & $\dot{38}$ & $\dot{38}$ &  &  &  &  &  & & &  &  &  & &  &  &  &  &  &  & &  &  &  & &  &  \\ 
     & $\mathring{13}$ & $\dot{33}$ &  & $\mathring{18}$ & $\dot{4}$ &  & $\mathring{26}$ & $\dot{7}$ & $\dot{8}$ & ${9}$ &  & $\mathring{13}$ & $\mathring{13}$ & $\mathring{18}$ & $\mathring{18}$ & ${36}$ &  & ${40}$ & ${25}$ & $\mathring{26}$ & $\mathring{26}$ & $\dot{7}$ & $\dot{7}$ & $\dot{8}$ & $\dot{8}$ & $\dot{4}$ & $\dot{4}$ & ${25}$ & ${9}$ & $\dot{33}$ & $\dot{33}$ & ${36}$ & ${40}$ &  &  &  &  &  &  &  &  &  &  &  &  &  &  &  &  &  &  &  &  &  &  & &  \\ 
    \hline \hline
    \end{tabular}
    }}
\caption{MP-57, P-33, and their optimal B-KS. The upper part of the table displays the $57$ vectors of MP-57. Each column corresponds to one vector. Each row gives one component. Here $t$ denotes $\sqrt{2}$. MP-57 has $|\mathcal{B}| = 40$ orthogonal bases, numbered from 1 to 40.
The second (middle) part of the table presents an optimal B-KS set for MP-57. The notation is the following: $\dot{2}$ indicates that basis 2 is only in $S_A$. $\mathring{1}$ indicates that basis 1 is only in $S_B$. 11 indicates that basis 11 is in neither $S_A$ nor $S_B$. The optimal B-KS set given is thus $S_A = \{\dot{2}, \dot{4}, \dot{5}, \dot{7}, \dot{8}, \dot{33}, \dot{38}\}, S_B = \{\mathring{1}, \mathring{3}, \mathring{6}, \mathring{10}, \mathring{13}, \mathring{18}, \mathring{21}, \mathring{24}, \mathring{26}\}.$
P-33 is a subset of MP-57, and so its vectors and orthogonal bases are subsets of MP-57. In the third (bottom) part of the table, we present an optimal B-KS set for P-33. It is the same as for MP-57.}
    \label{tab:MP57+soleheir}
\end{table*}


\subsection{KP-36}


Kernaghan and Peres' 36-vector KS set in $d=8$ \cite{Kernaghan:1995PLA} hereafter called KP-36, is a critical KS set. It has $11$ orthogonal bases, and optimal B-KS sets have size $|S_A| = 6, |S_B| = 8$. The KS set and one of its optimal B-KS sets are shown in Table~\ref{tab:KP40+soleheir}.

We have been able to compute all $242$ B-KS capable sets, of which $66$ are essential. The B-KS capable sets up to isomorphism are listed by size in Table~\ref{tab:BKS-capable-iso}.

We briefly remark that in the original article there is a typo --- removing the four vectors mentioned on page two of Ref. \cite{Kernaghan:1995PLA} does not yield a KS set. 


\subsection{KP-40} \label{sec:KP-40}


Kernaghan and Peres' 40-vector KS set in $d=8$ \cite{Kernaghan:1995PLA}, hereafter called KP-40, has $25$ orthogonal bases and is an extension of KP-36. The KS set, as well as an optimal B-KS set (with $|S_A| = 3, |S_B| = 4$) are illustrated in Table~\ref{tab:KP40+soleheir}. 
Any B-KS set yielding a smaller input cardinality would have to have $|S_A| = |S_B| =3$ (recalling the discussion at the beginning of Sec.~\ref{sec:perf-quantum-strategies}), and so by Eq. \eqref{eq:lowerbound}, the associated KS set would have at most $8(3+3) = 48$ vectors.


\begin{table*}[h]
    \centering
    {\scalebox{0.75}{
    \begin{tabular}{crrrrrrrrrrrrrrrrrrrrrrrrrrrrrrrrrrrrrrrr}
    \hline \hline
     & $v_{1}$ & $v_{2}$ & $v_{3}$ & $v_{4}$ & $v_{5}$ & $v_{6}$ & $v_{7}$ & $v_{8}$ & $v_{9}$ & $v_{10}$ & $v_{11}$ & $v_{12}$ & $v_{13}$ & $v_{14}$ & $v_{15}$ & $v_{16}$ & $v_{17}$ & $v_{18}$ & $v_{19}$ & $v_{20}$ & $v_{21}$ & $v_{22}$ & $v_{23}$ & $v_{24}$ & $v_{25}$ & $v_{26}$ & $v_{27}$ & $v_{28}$ & $v_{29}$ & $v_{30}$ & $v_{31}$ & $v_{32}$ & $v_{33}$ & $v_{34}$ & $v_{35}$ & $v_{36}$ & $v_{37}$ & $v_{38}$ & $v_{39}$ & $v_{40}$ \\ 
    \hline
    $v_{i1}$ & $0$ & $1$ & $1$ & $0$ & $1$ & $0$ & $0$ & $-1$ & $1$ & $0$ & $0$ & $0$ & $0$ & $0$ & $0$ & $0$ & $1$ & $0$ & $1$ & $0$ & $1$ & $0$ & $1$ & $0$ & $0$ & $0$ & $1$ & $1$ & $0$ & $0$ & $1$ & $1$ & $0$ & $0$ & $0$ & $0$ & $1$ & $1$ & $1$ & $1$ \\
    $v_{i2}$ & $1$ & $0$ & $0$ & $1$ & $0$ & $1$ & $-1$ & $0$ & $0$ & $1$ & $0$ & $0$ & $0$ & $0$ & $0$ & $0$ & $0$ & $1$ & $0$ & $1$ & $0$ & $1$ & $0$ & $1$ & $0$ & $0$ & $-1$ & $1$ & $0$ & $0$ & $-1$ & $1$ & $0$ & $0$ & $0$ & $0$ & $-1$ & $1$ & $-1$ & $1$ \\
    $v_{i3}$ & $1$ & $0$ & $0$ & $1$ & $0$ & $-1$ & $1$ & $0$ & $0$ & $0$ & $1$ & $0$ & $0$ & $0$ & $0$ & $0$ & $1$ & $0$ & $-1$ & $0$ & $1$ & $0$ & $-1$ & $0$ & $1$ & $1$ & $0$ & $0$ & $1$ & $1$ & $0$ & $0$ & $0$ & $0$ & $0$ & $0$ & $-1$ & $-1$ & $1$ & $1$ \\
    $v_{i4}$ & $0$ & $1$ & $1$ & $0$ & $-1$ & $0$ & $0$ & $1$ & $0$ & $0$ & $0$ & $1$ & $0$ & $0$ & $0$ & $0$ & $0$ & $1$ & $0$ & $-1$ & $0$ & $1$ & $0$ & $-1$ & $-1$ & $1$ & $0$ & $0$ & $-1$ & $1$ & $0$ & $0$ & $0$ & $0$ & $0$ & $0$ & $1$ & $-1$ & $-1$ & $1$ \\
    $v_{i5}$ & $1$ & $0$ & $0$ & $-1$ & $0$ & $1$ & $1$ & $0$ & $0$ & $0$ & $0$ & $0$ & $1$ & $0$ & $0$ & $0$ & $1$ & $0$ & $1$ & $0$ & $-1$ & $0$ & $-1$ & $0$ & $0$ & $0$ & $-1$ & $-1$ & $0$ & $0$ & $1$ & $1$ & $1$ & $1$ & $1$ & $1$ & $0$ & $0$ & $0$ & $0$ \\
    $v_{i6}$ & $0$ & $1$ & $-1$ & $0$ & $1$ & $0$ & $0$ & $1$ & $0$ & $0$ & $0$ & $0$ & $0$ & $1$ & $0$ & $0$ & $0$ & $1$ & $0$ & $1$ & $0$ & $-1$ & $0$ & $-1$ & $0$ & $0$ & $1$ & $-1$ & $0$ & $0$ & $-1$ & $1$ & $-1$ & $1$ & $-1$ & $1$ & $0$ & $0$ & $0$ & $0$ \\
    $v_{i7}$ & $0$ & $-1$ & $1$ & $0$ & $1$ & $0$ & $0$ & $1$ & $0$ & $0$ & $0$ & $0$ & $0$ & $0$ & $1$ & $0$ & $1$ & $0$ & $-1$ & $0$ & $-1$ & $0$ & $1$ & $0$ & $-1$ & $-1$ & $0$ & $0$ & $1$ & $1$ & $0$ & $0$ & $-1$ & $-1$ & $1$ & $1$ & $0$ & $0$ & $0$ & $0$ \\
    $v_{i8}$ & $-1$ & $0$ & $0$ & $1$ & $0$ & $1$ & $1$ & $0$ & $0$ & $0$ & $0$ & $0$ & $0$ & $0$ & $0$ & $1$ & $0$ & $1$ & $0$ & $-1$ & $0$ & $-1$ & $0$ & $1$ & $1$ & $-1$ & $0$ & $0$ & $-1$ & $1$ & $0$ & $0$ & $1$ & $-1$ & $-1$ & $1$ & $0$ & $0$ & $0$ & $0$ \\
    \hline
     & $\ddot{1}$ & $\ddot{1}$ & $\ddot{1}$ & $\ddot{1}$ & $\ddot{1}$ & $\ddot{1}$ & $\ddot{1}$ & $\ddot{1}$ & $\mathring{6}$ & $\mathring{6}$ & $\mathring{6}$ & $\mathring{6}$ & $\mathring{10}$ & $\mathring{10}$ & $\mathring{10}$ & $\mathring{10}$ & ${18}$ & ${18}$ & ${18}$ & ${18}$ & ${15}$ & ${15}$ & ${15}$ & ${15}$ & ${18}$ & ${18}$ & ${18}$ & ${18}$ & ${15}$ & ${15}$ & ${15}$ & ${15}$ & $\mathring{6}$ & $\mathring{6}$ & $\mathring{6}$ & $\mathring{6}$ & $\mathring{10}$ & $\mathring{10}$ & $\mathring{10}$ & $\mathring{10}$ \\ 
     & ${2}$ & ${2}$ & ${2}$ & ${2}$ & ${3}$ & ${3}$ & ${3}$ & ${3}$ & ${16}$ & ${8}$ & ${16}$ & ${8}$ & ${16}$ & ${8}$ & ${16}$ & ${8}$ & ${8}$ & ${16}$ & ${8}$ & ${16}$ & ${8}$ & ${16}$ & ${8}$ & ${16}$ & $\dot{20}$ & $\dot{20}$ & $\dot{20}$ & $\dot{20}$ & $\dot{20}$ & $\dot{20}$ & $\dot{20}$ & $\dot{20}$ & ${2}$ & ${3}$ & ${3}$ & ${2}$ & ${3}$ & ${2}$ & ${2}$ & ${3}$ \\ 
    $\mathcal{B}_{40}$ & ${5}$ & ${4}$ & ${4}$ & ${5}$ & ${4}$ & ${5}$ & ${5}$ & ${4}$ & ${5}$ & ${4}$ & ${4}$ & ${5}$ & ${4}$ & ${5}$ & ${5}$ & ${4}$ & $\ddot{14}$ & $\ddot{14}$ & $\ddot{14}$ & $\ddot{14}$ & $\ddot{14}$ & $\ddot{14}$ & $\ddot{14}$ & $\ddot{14}$ & ${22}$ & ${24}$ & ${22}$ & ${24}$ & ${22}$ & ${24}$ & ${22}$ & ${24}$ & ${24}$ & ${22}$ & ${24}$ & ${22}$ & ${24}$ & ${22}$ & ${24}$ & ${22}$ \\ 
     & ${19}$ & ${19}$ & ${23}$ & ${23}$ & ${19}$ & ${19}$ & ${23}$ & ${23}$ & ${7}$ & ${7}$ & ${7}$ & ${7}$ & ${7}$ & ${7}$ & ${7}$ & ${7}$ & ${12}$ & ${12}$ & ${17}$ & ${17}$ & ${12}$ & ${12}$ & ${17}$ & ${17}$ & ${19}$ & ${23}$ & ${23}$ & ${19}$ & ${23}$ & ${19}$ & ${19}$ & ${23}$ & ${12}$ & ${12}$ & ${17}$ & ${17}$ & ${12}$ & ${12}$ & ${17}$ & ${17}$ \\ 
     & ${13}$ & ${11}$ & ${13}$ & ${11}$ & ${13}$ & ${11}$ & ${13}$ & ${11}$ & ${21}$ & ${21}$ & ${9}$ & ${9}$ & ${21}$ & ${21}$ & ${9}$ & ${9}$ & ${11}$ & ${13}$ & ${13}$ & ${11}$ & ${13}$ & ${11}$ & ${11}$ & ${13}$ & ${21}$ & ${21}$ & ${9}$ & ${9}$ & ${21}$ & ${21}$ & ${9}$ & ${9}$ & ${25}$ & ${25}$ & ${25}$ & ${25}$ & ${25}$ & ${25}$ & ${25}$ & ${25}$ \\ 
     \hline
     &  & &  &  &  &  &  &  & &  &  &  & $\mathring{10}$ & $\mathring{10}$ & $\mathring{10}$ & $\mathring{10}$ &  & &  &  & $\dot{15}$ & $\dot{15}$ & $\dot{15}$ & $\dot{15}$ & &  &  &  & $\dot{15}$ & $\dot{15}$ & $\dot{15}$ & $\dot{15}$ &  &  &  &  & $\mathring{10}$ & $\mathring{10}$ & $\mathring{10}$ & $\mathring{10}$ \\ 
     &  &  &  &  & $\mathring{3}$ & $\mathring{3}$ & $\mathring{3}$ & $\mathring{3}$ & & $\ddot{8}$ &  & $\ddot{8}$ &  & $\ddot{8}$ &  & $\ddot{8}$ & $\ddot{8}$ &  & $\ddot{8}$ &  & $\ddot{8}$ &  & $\ddot{8}$ &  & &  &  &  &  &  &  &  &  & $\mathring{3}$ & $\mathring{3}$ &  & $\mathring{3}$ &  &  & $\mathring{3}$ \\ 
    $\mathcal{B}_{36}$ & & $\ddot{4}$ & $\ddot{4}$ &  & $\ddot{4}$ &  &  & $\ddot{4}$ & & $\ddot{4}$ & $\ddot{4}$ &  & $\ddot{4}$ &  &  & $\ddot{4}$ &  & &  &  &  &  &  &  & & $\mathring{24}$ &  & $\mathring{24}$ &  & $\mathring{24}$ &  & $\mathring{24}$ & $\mathring{24}$ &  & $\mathring{24}$ &  & $\mathring{24}$ &  & $\mathring{24}$ &  \\ 
     & &  & $\dot{23}$ & $\dot{23}$ &  &  & $\dot{23}$ & $\dot{23}$ & &  &  &  &  &  &  &  &  & & $\mathring{17}$ & $\mathring{17}$ &  &  & $\mathring{17}$ & $\mathring{17}$ & & $\dot{23}$ & $\dot{23}$ &  & $\dot{23}$ &  &  & $\dot{23}$ &  &  & $\mathring{17}$ & $\mathring{17}$ &  &  & $\mathring{17}$ & $\mathring{17}$ \\ 
     & & $\ddot{11}$ &  & $\ddot{11}$ &  & $\ddot{11}$ &  & $\ddot{11}$ & &  & $\dot{9}$ & $\dot{9}$ &  &  & $\dot{9}$ & $\dot{9}$ & $\ddot{11}$ & &  & $\ddot{11}$ &  & $\ddot{11}$ & $\ddot{11}$ &  & &  & $\dot{9}$ & $\dot{9}$ &  &  & $\dot{9}$ & $\dot{9}$ & $\mathring{25}$ & $\mathring{25}$ & $\mathring{25}$ & $\mathring{25}$ & $\mathring{25}$ & $\mathring{25}$ & $\mathring{25}$ & $\mathring{25}$ \\ 
     \hline \hline
    \end{tabular}
    }}
\caption{KP-40, KP-36, and their optimal B-KS. The upper part of the table displays the $40$ vectors of KP-40, with each column corresponding to a vector, and each row to a component. The second part of the table presents an optimal B-KS set for KP-40. KP-40 has $|\mathcal{B}| = 24$ orthogonal bases, numbered from $1$ to $24$. The notation is the following: $\dot{20}$ indicates that basis $20$ is only in $S_A$. $\mathring{6}$ indicates that basis $6$ is only in $S_B$. $\protect\ddot{1}$ indicates that the basis $1$ is in both $S_A$ and $S_B$. $2$ indicates that the basis $2$ is in neither $S_A$ nor $S_B$. An optimal B-KS set of KP-40 is thus
$S_A = \{\ddot{1}, \ddot{14}, \dot{20}\}, S_B= \{\ddot{1}, \mathring{6}, \mathring{10}, \ddot{14}\}.$ KP-36 is a subset of KP-40, and so its vectors and orthogonal bases are subsets of those of KP-40.
The third part of the table illustrates an optimal B-KS set for KP-36 using the same notation as for KP-40. The optimal B-KS set illustrated for KP-36 is $S_A = \{\ddot{4}, \ddot{8}, \dot{9}, \ddot{11}, \dot{15}, \dot{23}\}, S_B =\{\mathring{3}, \ddot{4},\ddot{8},\mathring{10},\ddot{11},\mathring{17},\mathring{24}, \mathring{25}\}$.}
    \label{tab:KP40+soleheir}
\end{table*}


\subsection{S-29} \label{sec:S-29}


The KS set S-29 is built from CEG-18 via a recursive construction introduced in Ref. \cite{Cabello:2005PLA} (therein called S-5). It is in dimension $5$, consists of $29$ vectors and $16$ orthogonal bases, and is critical. The KS set, as well as one of its optimal B-KS sets (with $|S_A| = 6, |S_B| = 9)$ are illustrated in Table~\ref{tab:S-29}.


\begin{table*}[h]
    \centering
    \begin{tabular}{crrrrrrrrrrrrrrrrrrrrrrrrrrrrr}
    \hline \hline
     & $v_{1}$ & $v_{2}$ & $v_{3}$ & $v_{4}$ & $v_{5}$ & $v_{6}$ & $v_{7}$ & $v_{8}$ & $v_{9}$ & $v_{10}$ & $v_{11}$ & $v_{12}$ & $v_{13}$ & $v_{14}$ & $v_{15}$ & $v_{16}$ & $v_{17}$ & $v_{18}$ & $v_{19}$ & $v_{20}$ & $v_{21}$ & $v_{22}$ & $v_{23}$ & $v_{24}$ & $v_{25}$ & $v_{26}$ & $v_{27}$ & $v_{28}$ & $v_{29}$ \\ 
    \hline
    $v_{i1}$ & $1$ & $1$ & $0$ & $0$ & $0$ & $1$ & $-1$ & $0$ & $0$ & $1$ & $1$ & $1$ & $0$ & $0$ & $0$ & $0$ & $0$ & $0$ & $1$ & $1$ & $0$ & $1$ & $0$ & $0$ & $0$ & $0$ & $1$ & $0$ & $0$ \\
    $v_{i2}$ & $1$ & $-1$ & $1$ & $1$ & $0$ & $-1$ & $1$ & $1$ & $1$ & $0$ & $1$ & $1$ & $0$ & $1$ & $1$ & $0$ & $1$ & $0$ & $0$ & $1$ & $0$ & $0$ & $1$ & $0$ & $-1$ & $0$ & $-1$ & $1$ & $1$ \\
    $v_{i3}$ & $1$ & $0$ & $0$ & $-1$ & $1$ & $1$ & $1$ & $1$ & $0$ & $0$ & $0$ & $-1$ & $0$ & $-1$ & $1$ & $1$ & $1$ & $0$ & $0$ & $-1$ & $1$ & $1$ & $-1$ & $1$ & $1$ & $0$ & $-1$ & $0$ & $1$ \\
    $v_{i4}$ & $-1$ & $0$ & $0$ & $0$ & $-1$ & $-1$ & $1$ & $1$ & $-1$ & $1$ & $0$ & $1$ & $1$ & $1$ & $-1$ & $1$ & $-1$ & $1$ & $0$ & $-1$ & $0$ & $0$ & $-1$ & $0$ & $1$ & $0$ & $1$ & $1$ & $0$ \\
    $v_{i5}$ & $0$ & $0$ & $1$ & $0$ & $0$ & $0$ & $0$ & $-1$ & $0$ & $0$ & $0$ & $0$ & $1$ & $-1$ & $-1$ & $0$ & $1$ & $0$ & $0$ & $0$ & $-1$ & $0$ & $1$ & $1$ & $1$ & $1$ & $0$ & $0$ & $0$ \\
    \hline
     &  &  & $\mathring{1}$ &  & $\mathring{1}$ &  &  & $\mathring{1}$ &  &  &  &  &  &  &  &  &  &  & $\mathring{1}$ & $\dot{13}$ &  & $\dot{13}$ &  &  & $\mathring{1}$ & $\dot{13}$ & $\dot{13}$ & $\dot{13}$ &  \\ 
     &  &  &  & $\mathring{2}$ &  & $\mathring{11}$ &  & $\mathring{2}$ &  & $\mathring{11}$ &  &  & $\mathring{2}$ &  &  &  & $\mathring{2}$ &  & $\mathring{2}$ & $\mathring{11}$ &  &  &  &  &  & $\mathring{11}$ &  &  & $\mathring{11}$ \\ 
     & $\mathring{3}$ & $\mathring{3}$ &  &  &  &  &  &  & ${15}$ &  &  & $\mathring{3}$ &  &  &  & $\mathring{3}$ &  &  & ${15}$ &  & ${15}$ &  &  & ${15}$ &  & $\mathring{3}$ &  & ${15}$ &  \\ 
     &  &  & $\dot{6}$ &  &  &  & $\mathring{4}$ &  & $\mathring{4}$ &  &  & $\mathring{4}$ &  & $\dot{6}$ & $\dot{6}$ & $\dot{6}$ &  &  & $\dot{6}$ &  &  & $\mathring{4}$ &  &  &  & $\mathring{4}$ &  &  &  \\ 
    $\mathcal{B}_{29}$ & $\dot{10}$ & $\dot{9}$ &  & $\dot{10}$ &  &  & $\dot{10}$ &  &  & $\dot{10}$ & $\dot{9}$ &  & $\dot{5}$ & $\dot{5}$ &  &  &  & $\dot{9}$ & $\dot{5}$ &  & $\dot{9}$ &  & $\dot{5}$ & $\dot{9}$ &  & $\dot{10}$ &  &  & $\dot{5}$ \\ 
     &  & $\mathring{7}$ &  &  & $\mathring{7}$ &  &  &  &  &  & $\mathring{7}$ &  &  &  &  & $\mathring{7}$ & $\dot{14}$ &  & $\dot{14}$ &  & $\dot{14}$ &  &  &  & $\dot{14}$ & $\mathring{7}$ &  & $\dot{14}$ &  \\ 
     &  &  &  &  &  & $\mathring{8}$ &  &  &  &  & $\mathring{8}$ &  &  &  & $\mathring{16}$ & $\mathring{8}$ &  &  & $\mathring{16}$ &  &  &  & $\mathring{16}$ & $\mathring{16}$ &  & $\mathring{8}$ & $\mathring{8}$ & $\mathring{16}$ &  \\ 
     &  &  &  & $\mathring{12}$ &  &  &  &  &  &  &  &  &  &  &  &  &  & $\mathring{12}$ & $\mathring{12}$ &  &  &  &  &  &  & $\mathring{12}$ &  &  & $\mathring{12}$ \\ 
    \hline \hline
    \end{tabular}
    \caption{S-29 and an optimal B-KS. The upper part of the table displays the $29$ vectors of S-29. Each column corresponds to one vector. Each row gives one component. The lower part of the table presents an optimal B-KS. S-29 has $|\mathcal{B}|=16$ bases, the bases are numbered from 1 to 16. Each vector belongs to the bases indicated the lower part of the column corresponding to that vector. The notation is the following: $\dot{5}$ indicates that basis $5$ is only in $S_A$. $\mathring{1}$ indicates that basis $1$ is only in $S_B$. $15$ indicates that basis $15$ is in neither $S_A$ nor $S_B$.
    The optimal B-KS set illustrated is thus $S_A = \{ \dot{5}, \dot{6}, \dot{9}, \dot{10}, \dot{13}, \dot{14}\}$, $S_B = \{\mathring{1}, \mathring{2}, \mathring{3}, \mathring{4}, \mathring{7}, \mathring{8}, \mathring{11}, \mathring{12}, \mathring{16}\}.$}
    \label{tab:S-29}
\end{table*}


We have been able to compute all $13251$ B-KS capable sets in this case, of which $187$ are essential. Table~\ref{tab:BKS-capable-iso} illustrates the number of B-KS capable sets by size up to isomorphism.


\subsection{S-31} \label{sec:S31}


In the recursive construction of Ref. \cite{Cabello:2005PLA} extra vectors [specifically $(0,1,0,0,0),(0,0,1,0,0)$] are eliminated in order to obtain a critical KS set. We define S-31 to be the noncritical KS set that is obtained before the elimination of these vectors (equivalently S-29 with the additional vectors previously mentioned). This $5$-dimensional KS set has $31$ vectors and $21$ orthogonal bases. Optimal B-KS sets have the same size as for S-29: $|S_A| = 6, |S_B| = 9$.


\subsection{S-34} \label{sec:S34}


The KS set in dimension $7$ with fewest number of vectors known is S-34, which was introduced in Ref. \cite{Cabello:1996PLA} (therein called S-7). It has $28$ orthogonal bases and is critical. The KS set as well as an optimal B-KS set (with $|S_A| = 6, |S_B| = 8$) is illustrated in Table~\ref{tab:S34}.


\begin{table*}[h]
    \centering
    {\scalebox{0.8}{
    \begin{tabular}{ccccccccccccccccccccccccccccccccccc}
    \hline \hline
     & $v_{1}$ & $v_{2}$ & $v_{3}$ & $v_{4}$ & $v_{5}$ & $v_{6}$ & $v_{7}$ & $v_{8}$ & $v_{9}$ & $v_{10}$ & $v_{11}$ & $v_{12}$ & $v_{13}$ & $v_{14}$ & $v_{15}$ & $v_{16}$ & $v_{17}$ & $v_{18}$ & $v_{19}$ & $v_{20}$ & $v_{21}$ & $v_{22}$ & $v_{23}$ & $v_{24}$ & $v_{25}$ & $v_{26}$ & $v_{27}$ & $v_{28}$ & $v_{29}$ & $v_{30}$ & $v_{31}$ & $v_{32}$ & $v_{33}$ & $v_{34}$ \\ 
    \hline
    $v_{i1}$ & $0$ & $1$ & $0$ & $1$ & $1$ & $0$ & $0$ & $1$ & $1$ & $0$ & $0$ & $0$ & $1$ & $0$ & $0$ & $0$ & $0$ & $0$ & $0$ & $0$ & $1$ & $1$ & $0$ & $1$ & $0$ & $-1$ & $0$ & $0$ & $0$ & $0$ & $1$ & $0$ & $0$ & $0$ \\
    $v_{i2}$ & $0$ & $1$ & $0$ & $1$ & $0$ & $0$ & $0$ & $-1$ & $0$ & $0$ & $0$ & $0$ & $1$ & $1$ & $0$ & $0$ & $0$ & $0$ & $0$ & $1$ & $-1$ & $0$ & $0$ & $-1$ & $1$ & $1$ & $0$ & $0$ & $0$ & $0$ & $1$ & $1$ & $0$ & $0$ \\
    $v_{i3}$ & $0$ & $1$ & $0$ & $-1$ & $1$ & $0$ & $0$ & $-1$ & $0$ & $1$ & $0$ & $0$ & $0$ & $0$ & $0$ & $0$ & $0$ & $1$ & $0$ & $1$ & $1$ & $0$ & $0$ & $0$ & $0$ & $1$ & $0$ & $0$ & $0$ & $0$ & $-1$ & $-1$ & $0$ & $0$ \\
    $v_{i4}$ & $0$ & $-1$ & $0$ & $-1$ & $0$ & $0$ & $1$ & $1$ & $1$ & $0$ & $-1$ & $0$ & $0$ & $-1$ & $0$ & $0$ & $1$ & $1$ & $1$ & $0$ & $-1$ & $0$ & $1$ & $0$ & $1$ & $1$ & $1$ & $1$ & $1$ & $1$ & $1$ & $0$ & $1$ & $0$ \\
    $v_{i5}$ & $0$ & $0$ & $1$ & $0$ & $0$ & $1$ & $1$ & $0$ & $0$ & $0$ & $1$ & $1$ & $0$ & $0$ & $0$ & $0$ & $-1$ & $0$ & $1$ & $0$ & $0$ & $0$ & $0$ & $0$ & $0$ & $0$ & $-1$ & $1$ & $0$ & $1$ & $0$ & $0$ & $-1$ & $1$ \\
    $v_{i6}$ & $1$ & $0$ & $0$ & $0$ & $0$ & $0$ & $0$ & $0$ & $0$ & $0$ & $1$ & $-1$ & $0$ & $0$ & $0$ & $1$ & $0$ & $0$ & $1$ & $0$ & $0$ & $0$ & $1$ & $0$ & $0$ & $0$ & $1$ & $-1$ & $0$ & $-1$ & $0$ & $0$ & $-1$ & $1$ \\
    $v_{i7}$ & $1$ & $0$ & $-1$ & $0$ & $0$ & $1$ & $0$ & $0$ & $0$ & $0$ & $1$ & $0$ & $0$ & $0$ & $1$ & $0$ & $0$ & $0$ & $-1$ & $0$ & $0$ & $0$ & $0$ & $0$ & $0$ & $0$ & $-1$ & $-1$ & $1$ & $1$ & $0$ & $0$ & $1$ & $0$ \\
    \hline
     & $\mathring{3}$ &  &  & ${26}$ & ${26}$ & ${18}$ &  & ${26}$ &  & $\mathring{3}$ &  & ${26}$ & $\mathring{3}$ &  & ${26}$ &  & $\mathring{3}$ &  & $\mathring{3}$ & ${18}$ &  & ${18}$ & ${18}$ & $\mathring{3}$ & ${26}$ &  &  & ${18}$ &  & $\mathring{3}$ &  & ${18}$ & ${18}$ & ${26}$ \\ 
     & ${16}$ &  &  &  & ${13}$ & $\mathring{5}$ &  &  &  & $\mathring{5}$ &  & ${13}$ & $\mathring{5}$ & ${13}$ & ${13}$ &  & ${16}$ &  & ${16}$ & ${16}$ &  & ${16}$ & $\mathring{5}$ & $\mathring{5}$ &  & ${13}$ &  & $\mathring{5}$ &  & ${16}$ & ${13}$ & ${16}$ & $\mathring{5}$ & ${13}$ \\ 
     & ${8}$ & $\dot{28}$ & $\dot{25}$ & $\dot{25}$ & $\dot{25}$ & $\dot{25}$ & ${8}$ & $\dot{25}$ & $\dot{28}$ & ${8}$ &  & $\dot{28}$ & ${8}$ &  & $\dot{28}$ & $\dot{25}$ &  &  &  &  &  &  &  & ${8}$ & $\dot{25}$ & $\dot{28}$ & ${8}$ &  &  &  &  & $\dot{28}$ & ${8}$ & $\dot{28}$ \\ 
     &  &  & ${17}$ &  &  &  &  & ${2}$ &  &  & ${17}$ & ${2}$ & ${2}$ &  & ${2}$ &  &  & ${2}$ &  & ${17}$ & ${2}$ & ${17}$ & ${17}$ &  &  &  &  &  &  & ${17}$ &  & ${17}$ &  & ${2}$ \\ 
     &  &  & $\mathring{1}$ &  &  & $\mathring{1}$ &  & $\mathring{1}$ &  & ${11}$ &  & ${11}$ & $\mathring{1}$ & ${11}$ & ${11}$ & $\mathring{1}$ &  & $\mathring{1}$ &  &  & $\mathring{1}$ & ${11}$ &  &  & ${11}$ &  &  &  &  &  &  &  &  & ${11}$ \\ 
     &  &  & $\dot{4}$ &  &  &  & ${20}$ &  &  & $\dot{4}$ & $\dot{4}$ &  & $\dot{4}$ &  & ${20}$ & ${20}$ & ${20}$ &  &  & ${20}$ &  & ${20}$ & $\dot{4}$ & $\dot{4}$ &  &  &  &  &  & $\dot{4}$ &  & ${20}$ &  &  \\ 
    $\mathcal{B}_{34}$ &  &  & ${14}$ & ${14}$ &  & ${14}$ &  &  & ${14}$ & $\dot{6}$ & $\dot{6}$ & $\dot{6}$ & $\dot{6}$ &  &  & ${14}$ &  &  & $\dot{6}$ & ${14}$ & ${14}$ &  &  & $\dot{6}$ &  &  &  &  & $\dot{6}$ &  &  &  &  &  \\ 
     &  &  &  &  &  &  & $\mathring{7}$ &  &  & $\mathring{7}$ & ${19}$ & ${19}$ & $\mathring{7}$ &  & $\mathring{7}$ & $\mathring{7}$ & $\mathring{7}$ &  & ${19}$ & ${19}$ &  & ${19}$ &  & $\mathring{7}$ &  &  &  &  & ${19}$ &  &  & ${19}$ &  &  \\ 
     &  & $\dot{23}$ & $\dot{23}$ &  &  & $\dot{23}$ &  &  &  &  &  &  &  &  &  & $\dot{23}$ &  & $\dot{23}$ &  & $\mathring{22}$ &  & $\mathring{22}$ &  & $\dot{23}$ &  &  & $\mathring{22}$ & $\mathring{22}$ & $\mathring{22}$ &  & $\dot{23}$ & $\mathring{22}$ &  & $\mathring{22}$ \\ 
     &  & ${27}$ & ${27}$ &  &  & ${27}$ &  &  & ${27}$ & ${9}$ &  &  & ${9}$ &  &  & ${27}$ &  &  &  &  &  &  &  & ${9}$ &  & ${27}$ & ${9}$ & ${9}$ & ${9}$ &  &  & ${27}$ &  & ${9}$ \\ 
     &  & ${24}$ & $\mathring{10}$ &  &  & $\mathring{10}$ &  &  &  & $\mathring{10}$ &  & ${24}$ &  & $\mathring{10}$ & ${24}$ & $\mathring{10}$ &  & ${24}$ &  &  &  & $\mathring{10}$ &  & ${24}$ & $\mathring{10}$ &  &  &  &  &  & ${24}$ &  &  & ${24}$ \\ 
     &  &  & $\mathring{12}$ & $\mathring{15}$ & $\mathring{12}$ & $\mathring{12}$ &  &  & $\mathring{15}$ &  &  & $\mathring{15}$ &  & $\mathring{12}$ & $\mathring{15}$ & $\mathring{12}$ &  &  &  & $\mathring{15}$ & $\mathring{15}$ &  &  &  &  & $\mathring{12}$ &  &  &  &  & $\mathring{12}$ &  &  & $\mathring{15}$ \\ 
     & $\dot{21}$ &  &  &  &  &  & $\dot{21}$ &  &  &  &  &  &  &  &  &  &  &  &  & $\dot{21}$ &  & $\dot{21}$ &  &  &  &  & $\dot{21}$ &  &  &  &  & $\dot{21}$ & $\dot{21}$ &  \\ 
    \hline \hline
    \end{tabular}}}
\caption{S-34 and an optimal B-KS. The upper part of the table displays the $34$ vectors of S-34. Each column corresponds to one vector. Each row gives one component. The lower part of the table presents an optimal B-KS. S-34 has $|\mathcal{B}|=28$ bases, the bases are numbered from 1 to 28. Each vector belongs to the bases indicated the lower part of the column corresponding to that vector. The notation is the following: $\dot{4}$ indicates that basis $4$ is only in $S_A$. $\mathring{3}$ indicates that basis $3$ is only in $S_B$. $16$ indicates that basis $16$ is in neither $S_A$ nor $S_B$.
    The optimal B-KS set illustrated is thus $S_A = \{\dot{4}, \dot{6}, \dot{21}, \dot{23}, \dot{25}, \dot{28} \}$, $S_B = \{\mathring{1}, \mathring{3}, \mathring{5}, \mathring{7}, \mathring{10}, \mathring{12}, \mathring{15}, \mathring{22}\}.$}
    \label{tab:S34}
\end{table*}


\subsection{S-35}


Similarly to S-29, the KS set S-35 is obtained by eliminating vectors in the recursive construction of Ref. \cite{Cabello:2005PLA}. In this case, a single vector [$(0,0,0,1,0,0,0)$] is eliminated (so S-35 may be obtained from S-34 by adding back this vector). S-35 has $32$ orthogonal bases, and as for S-34, optimal B-KSs of S-35 have $|S_A| = 6, |S_B| = 8$.


\section{Discussion and future}
\label{Sec:Discussion}


In light of the recent results outlined in Sec.~\ref{sec:recent-results}, BPQSs have become fundamental objects of study in quantum foundations, quantum information, and quantum computation. Therefore, it is important to know for which input cardinalities one may have BPQS. In this article, we have addressed this question by first sharpening a connection between BPQSs and the KS sets (Theorem \ref{theo:min-KS}), and then exploiting it in order to find BPQSs of smaller (or equal) input cardinality than those constructed by existing methods (Methods I and II described in Sec.~\ref{subsec:connections}).



\begin{table}[h]
    \centering
    \begin{tabular}{ccccc}
    \hline \hline
       KS-set & Dimension & Optimal B-KS set & Method I & Method II \\
       \hline 
       P-33 \cite{Cabello:2023XXX} & 3 & 7-9 & 16-16 & 16-33  \\
       CK-31 & 3 & 8-9  & 17-17 & 17-31 \\
       CK-33 & 3 & 8-9 & 20-20 & 20-33 \\
       CK-37 & 3 & 8-9 & 22-22 & 22-37 \\
       P-24 \cite{Cabello:2023XXX} & 4 & 3-3  & 24-24 & 24-24  \\
       K-20 & 4 & 4-7 & 11-11 & 11-20   \\
       CEG-18 & 4 & 5-6 & 9-9 & 9-18  \\
       ZP-28 & 4 & 6-8 & 14-14 & 14-28  \\
       S-29 & 5 & 6-9 & 16-16 & 16-29  \\
       L-21 \cite{trandafir2024perfectquantumstrategiessmall} & 6 & 5-5 & 7-7 & 7-21 \\
       S-7 & 7 & 6-8 & 22-22 & 22-34 \\
       KP-40 & 8 & 3-4 & 25-25 & 25-40 \\
       KP-36 & 8 & 6-8 & 11-11 & 11-36 \\ 
       TC-36 \cite{trandafir2024perfectquantumstrategiessmall} & 8 & 7-7 & 9-9 & 9-36 \\
       \hline \hline 
    \end{tabular}
\caption{Sizes of the optimal B-KS sets and, therefore, of the BPQSs with minimum $|X| |Y|$) for some KS sets of small cardinality (including those with the smallest cardinality known in each dimension). ``Optimal B-KS set'' gives $|S_A|$-$|S_B|$ (i.e., $|X|$-$|Y|$) for the optimal B-KS associated with the KS set. ``Method I'' and  ``Method II'' give $|S_A|$-$|S_B|$ using, respectively, Methods I and II (see Sec.~\ref{subsec:connections}).}
     \label{tab:B-KS-final}
\end{table}


In Tables \ref{tab:BKS-capable-iso} and \ref{tab:B-KS-final}, we have summarized our results. For completeness, we have added some other recent results from Refs. \cite{Cabello:2023XXX,trandafir2024perfectquantumstrategiessmall}. 

As we can see, the algorithm leads to BPQSs matching the existing records in $d=3$ (associated with P-33) \cite{Cabello:2023XXX} and $d=4$ (associated with P-24) \cite{Cabello:2023XXX}, produces new records in $d=5, 7, 8$, and gives results largely improving those obtained using Methods I and II in each dimension $d=3,4,5,6,7,8$. 
Particularly interesting is the observation that, in $d=8$, there is a nontrivial BPQS with $|X|=3$ and $|Y|=4$. If this cardinality is not minimum, then any B-KS with smaller cardinality would have to correspond to a KS set in dimension $8$ with, at most, $48$ vectors.

The one-to-one connection between BPQSs and KS sets suggests that the BPQSs presented in Table~\ref{tab:B-KS-final} might already include those with the minimum $|X| |Y|$ for each of the dimensions. In the worst case scenario, we are not aware of any qudit-qudit BPQSs with smaller input cardinality for $d=3,4,5,6,7,8$ in the literature.

Some directions for future research are the following: 

{\em Potential improvements of Algorithm \ref{alg:bks}.} Algorithm \ref{alg:bks} is sufficiently efficient in order to compute optimal B-KS sets for each of the B-KS sets that we have considered in this article, with the exception of MP-57 and Pen-40. In both cases, we implemented a modified version of the algorithm which takes into account the symmetries of the input KS. This allows one to drastically reduce the number of calls to the ILPs ``is B-KS'' and ``is B-KS capable.'' This implementation could potentially be improved so as not to generate as many potential B-KS sets that are isomorphic to those that have already been seen. 

The computation of all B-KS capable sets is also an interesting problem, and one where taking advantage of symmetries may be especially fruitful. One possible direction would be to apply an isomorph-free generation approach (see Ref. \cite{Mckay1998}). Such an approach has been applied to KS sets with much success recently in order to prove that the smallest set in dimension $3$ has, at least, $24$ vectors (see Refs. \cite{Kirchweger:2023,Li:2023XXX}). 

Finally, we mention that, in Refs. \cite{Kirchweger:2023,Li:2023XXX}, a SAT-solver approach is applied to check whether a given set is KS. Since both of the ILPs ``is B-KS'' and ``is B-KS capable'' are really feasibility problems, replacing the ILPs with SAT-solver formulations may also be a way to speed up Algorithm \ref{alg:bks}.

{\em The sequences of B-KS capable sets.} The sequence of B-KS capable sets by size up to isomorphism illustrated in Table~\ref{tab:BKS-capable-iso} seems to have some interesting properties. In each case that we have examined, it is unimodular (i.e., it monotonically increases and then monotonically decreases). Moreover, each of the terms on the ``left'' are larger than their corresponding terms on the ``right'' (i.e., 
denoting the sequence of nonzero terms by $(a_0, \dots, a_n)$, we find that $a_i \geq a_{n-i}$ for each $i=1,\dots, \lfloor \frac{n}{2} \rfloor$). These properties do not hold universally for the sequence of essential B-KS capable sets: the exceptions being CK-33 and K-20. 

Another question is to obtain bounds for the range of sizes of all B-KS sets, and the essential ones. For each KS set we tested, apart from CK-33, the range of the essential B-KS capable sets is less than half of the range of that of all B-KS capable sets. 

The set of $S_A$ which are \emph{not} B-KS capable form an abstract simplicial complex $\Sigma$ (i.e., a collection of sets that is closed under taking subsets). One may try to exploit this structure in order to better understand the sequences.

{\em Disjoint optimal B-KS sets.} In many cases, optimal B-KS sets $(S_A, S_B)$ have the feature that $S_A$ and $S_B$ are disjoint: this occurs for P-24, K-20, CK-37, CK-31, P-33, S-29 (and its extension), and S-7. A question then is to understand which features of the KS set lead to this behavior. An answer to this question would provide a useful heuristic for Algorithm~\ref{alg:bks}.

There is also a more strict notion of disjointness, insisting not only that $S_A$ and $S_B$ are disjoint, but that the set of bases $\{b : b \in S_A \cup S_B\}$ partitions the vectors of the associated KS set. The only such case we have identified is P-24. Such a set satisfies inequality~\eqref{eq:lowerbound} with equality, and since P-24 is optimal in dimension $4$, it would be fruitful to identify other KS sets satisfying this property.

{\em Extensions.} Certain extensions have smaller B-KS sets than the KS sets they contain. For example, optimal B-KSs of P-24 have $|S_A| = |S_B| = 3$, while CEG-18 (which is contained in P-24) has optimal B-KSs with $|S_A| = 5, |S_B| = 6$. On the other hand, other extensions do not affect the sizes of the optimal B-KSs (as in S-31 vs S-29, S-35 vs S-34). This is especially clear in the CK-37 case, where the optimum is better than in the CK-33 case, but is the same as in the CK-31 case. Let us note that the ``completion'' of a given KS set will always have optimal B-KS sets at least as good as any extension of the KS set (obtained by adding vectors that are involved in some orthogonal basis with the original vectors). However, this process produces KS sets of large size (i.e., with many orthogonal bases) for which the computation of optimal B-KS sets becomes difficult.
This begs the question: is there an efficient (i.e., not adding too many vectors and bases) way to extend a KS set in order to reduce the size of the optimal B-KS sets?


\section*{Acknowledgements}


This work was supported by the EU-funded project \href{10.3030/101070558}{FoQaCiA}. AC is also supported by the \href{10.13039/501100011033}{MCINN/AEI} (Project No.\ PID2020-113738GB-I00).



%


\end{document}